   \numberwithin{equation}{section}
\journal{ } %
\newtheorem{thm}{Theorem}[section]
\newtheorem{lem}[thm]{Lemma}
\newtheorem{defn}[thm]{Definition}
\newtheorem{exam}[thm]{Example}
\begin{document}
\begin{frontmatter}
\author[rvt1]{Jian Wang}
\ead{wangj484@nenu.edu.cn}
\author[rvt2]{Yong Wang\corref{cor2}}
\ead{wangy581@nenu.edu.cn}
\cortext[cor2]{Corresponding author.}
\address[rvt1]{School of Science, Tianjin University of Technology and Education, Tianjin, 300222, P.R.China}
\address[rvt2]{School of Mathematics and Statistics, Northeast Normal University,
Changchun, 130024, P.R.China}

\title{The noncommutative residue, divergence theorems and \\the spectral geometry functional}
\begin{abstract}
In this paper, a simple proof of the divergence theorem is given by using the Dirac operator and non-commutative residues.
Then we extend the divergence theorem to compact manifolds with boundary by the   noncommutative residue of the $B$-algebra.
Furthermore, we present some spectral geometric functionals which we call spectral divergence functionals,  and we calculate the spectral divergence functionals of manifolds with (or without) boundary.
\end{abstract}
\begin{keyword}
 Divergence theorem; spectral divergence functional;  noncommutative residue.
\end{keyword}
\end{frontmatter}

\section{Introduction}
\label{1}
In Connes program of noncommutative geometry,
for associative unital algebra $\mathcal{A}$ and Hilbert space $\mathcal{H}$ such that there is an algebra
 homomorphism $\pi:\mathcal{A}\rightarrow B(\mathcal{H})$, where $B(\mathcal{H})$
denotes the algebra of bounded operators acting on $\mathcal{H}$,
the role of geometrical
objects is played by spectral triples $(\mathcal{A}; \mathcal{H}; D )$.
Similar to the commutative case and
the canonical spectral triple $(C^{\infty}(M); L^{2}(S); D )$, where $(M; g; S )$ is a closed spin
manifold and $D$  is the Dirac operator acting on the spinor bundle $S$, the spectrum
of the Dirac operator $D$  of a spectral triple $(\mathcal{A}; \mathcal{H}; D )$ encodes the geometrical
information of the spectral triple.
An important new feature of such geometries, which is absent in the
commutative case, is the existence of inner fluctuations of the metric.
The spectral-theoretic approach to scalar curvature has been extended  to quantum tori in the seminal work of
 Connes and Tretkoff \cite{CoT}, expanded in \cite{CoM} and then extensively studied by many authors \cite{DaS,Fa,Ka,KW}.
No doubt it would be extremely interesting to recover scalar curvature, Ricci curvature and other important tensors in both the
 classical setup as well as for the noncommutative geometry or quantum geometry.
 Unlike the scalar curvature, the Ricci curvature does not appear in the coefficients
of the heat trace of the Dirac Laplacian. Floricel etc. \cite{FGK} observed that
the Laplacian of the de Rham complex, more precisely the Laplacian on one forms, captures the Ricci
operator in its second term, and formulated the Ricci
operator as a spectral functional on the algebra of sections of the endomorphism
bundle of the cotangent bundle of $M$:
\begin{align*}
Ric (F)=a_{2}(tr(F), \triangle_{0})-a_{2}(F, \triangle_{1}),~~~F\in C^{\infty}(\mathrm{End}(T^{*}M)).
\end{align*}
In  \cite{Co8}, Connes and Chamseddine proved in the general framework of noncommutative geometry that the inner
fluctuations of the metric coming from the Morita equivalence ${\mathcal{A}%
}\sim{\mathcal{A}}$ generate perturbations of $D$ of the form $D\rightarrow
D^{\prime}=D+A$, where the $A$ plays the role of the gauge potentials and is a
self-adjoint element of the bimodule
\begin{equation*}
\Omega_{D}^{1}=\,\Big\{\;\sum\,a_{j}\,[D,b_{j}]\;;\,a_{j},\,b_{j}\in{\mathcal{A}%
}\Big\}. \label{bim}%
\end{equation*}
For any of the $a_{j}$ for $j>0$ is a scalar, it defines a functional on the universal $n$-forms $\Omega^{n}%
({\mathcal{A}})$ \cite{Co8} by the equality
\begin{equation*}
\int_{\varphi}\,a_{0}\,da_{1}\cdots\,da_{n}=\;\varphi(a_{0},\,a_{1}%
,\cdots,a_{n}). \label{hoschtocyc}%
\end{equation*}
The following functional is then a Hochschild
cocycle \cite{Co8} and is given as Dixmier trace of infinitesimals of order one,
\begin{equation*}
\tau_{0}(a^{0},a^{1},a^{2},a^{3},a^{4})=\,{\int\!\!\!\!\!\!-}\,a^{0}%
\,[D,a^{1}]\,D^{-1}[D,a^{2}]\,D^{-1}[D,a^{3}]\,D^{-1}[D,a^{4}]\,D^{-1}.
\label{tau0}%
\end{equation*}
The Dixmier trace \cite{JD} of an operator $T \in \mathcal{L}^{1,\infty}
(\mathcal{H}) $
  measures the {\it logarithmic divergence} of
its ordinary trace, and the Dixmier trace is invariant under
  perturbations by trace class operators.

Let $\bar{v}$,$\bar{w}$ with the components with
respect to local coordinates $\bar{v}_{a}$ and $\bar{w}_{b}$, respectively, be two differential forms represented in
such a way as endomorphisms (matrices) $c(\bar{v}) $ and $c(\bar{w}) $ on the spinor bundle.
For $n = 2m$ dimensional spin manifold $M$, by the  operator $c(\bar{w})(Dc(\bar{v})+c(\bar{v})D)D^{-n+1}$ acting on sections of
 a vector bundle $S(TM)$ of rank $2^{m}$, Dabrowski etc. \cite{DL} obtained the Einstein
tensor (or, more precisely, its contravariant version)   from functionals
over the dual bimodule of one-forms:
  \begin{equation*}
Wres\big(c(\bar{w})(Dc(\bar{v})+c(\bar{v})D)D^{-n+1}\big)=\frac{\upsilon_{n-1}}{6}2^{m}\int_{M}(Ric^{ab}
-\frac{1}{2}Rg^{ab})\bar{v}_{a} \bar{w}_{b}{\rm vol}_{g},
\end{equation*}
where $g^{*}(\bar{v},\bar{w})=g^{ab}v_{a}w_{b}$ and $G(\bar{v},\bar{w})= (Ric^{ab}
-\frac{1}{2}sg^{ab})\bar{v}_{a} \bar{w}_{b}$ denotes the Einstein tensor evaluated on the two one-forms, where
$\bar{v}=\sum_{a=1}^{n}v_{a}dx^{a} $, $\bar{w}=\sum_{b=1}^{n}w_{b}dx^{b} $ and $\upsilon_{n-1}=\frac{2\pi^{m}}{\Gamma(m)}.$
  Dabrowski etc. \cite{DL} demonstrated that the noncommutative residue density recovered
the tensors $g$ and
\begin{equation*}
G:=Ric-\frac{1}{2}s(g)g,
\end{equation*}
as certain bilinear functionals of vector fields on a manifold $M$, while their
dual tensors are recovered as a density of bilinear functionals of differential one-forms on $M$.
Dirac operators with torsion are by now well-established analytical tools in the study of special geometric structures.
 Ackermann and  Tolksdorf \cite{AT} proved a generalized version of the well-known Lichnerowicz formula for the square of the
most general Dirac operator with torsion  $D_{T}$ on an even-dimensional spin manifold associated to a metric connection with torsion.
In \cite{PS,PS1}, Pf$\ddot{a}$ffle and Stephan considered orthogonal connections with arbitrary torsion on compact Riemannian manifolds,
and for the induced Dirac operators, twisted Dirac operators and Dirac operators of Chamseddine-Connes type they computed the spectral
action.
 Sitarz and Zajac \cite{SZ} investigated the spectral action for scalar perturbations of Dirac operators.
Iochum and Levy \cite{IL}computed the heat kernel coefficients for Dirac operators with one-form perturbations.
Wang \cite{Wa4} considered the arbitrary perturbations of Dirac operators, and established the associated Kastler-
Kalau-Walze theorem.
In \cite{WWW,WW2,WW3}, we gave two Lichnerowicz type formulas for Dirac operators
and signature operators twisted by a vector bundle with a non-unitary connection,  and proved the Kastler-Kalau-Walze type theorem associated to
Dirac operators with torsion on compact manifolds with boundary.
In \cite{LWW1}, Liu et al.  proved the Kastler-Kalau-Walze type theorems for the perturbation of the de Rham Hodge operator on 4-dimensional
and 6-dimensional compact manifolds. The trilinear functional of differential one-forms for a finitely summable regular spectral triple with a noncommutative residue has been computed by Dabrowski et al. in \cite{DSZ}, and for a canonical spectral triple over a closed spin manifold,
they recoverd the torsion of the linear connection, which is a first step towards linking the spectral approach with the algebraic approach based on Levi-Civita connections.
In \cite{WW4}, we provide an explicit computation of the spectral torsion
associated with the Connes type operator on even dimension compact manifolds, and we also
extend the spectral torsion for the Connes type operator to compact manifolds with boundary.
For a finitely summable regular spectral triple, we \cite{WW5} recover two forms, torsion of the linear connection and four forms by the noncommutative residue and perturbed de-Rham Hodge operators, and provide an explicit computation of generalized spectral forms associated with the perturbed de-Rham Hodge Dirac triple.

One of the important theorems in mathematics and physics is the divergence theorem, which links the surface integral of a vector function to its divergence volume integral over a closed surface.
More precisely, let us recall that in the   divergence theorem description  about closed surfaces.
Let $S$ be a positively-oriented closed surface with interior
$\Omega$, and let $\vec{F}$ be a vector field continuously differentiable in a domain contatining $\Omega$, then
\begin{equation*}
 \iiint_{\Omega}{\rm div} \vec{F}{\rm d}V=\iint_{S}\vec{F}{\rm d}S.
\end{equation*}
Let $(M, g)$ be a compact oriented Riemannian manifold, then, for every smooth
tangent vector field $X$ on $M$, a divergence theorem states that
\begin{equation*}
\int_{M}{\rm div}X {\rm d}v_{g}=0.
\end{equation*}
The starting point is the usual divergence theorem for the case where $X$ is smooth and $M$ is a Riemannian manifold,
then it would be interesting to recover the divergence theorem in both the
 classical setup as well as for the noncommutative geometry.
Motivated by  the usual divergence theorem  on compact oriented Riemannian manifold and  the spectral torsion \cite{ AT,PS,PS1,DSZ,WW4,WW5},
the purpose of this paper is to explore the divergence theorem  from the noncommutative residue density,
and give some new spectral geometry functionals which is the extension of the results in \cite{DSZ,WW2} to the  differential operator $[D^2,f]$. In addition, we extend the divergence theorem and the spectral divergence functionals  to compact manifolds with boundary.

\section{Preliminaries on the  symbols representation of Dirac operator}
In this section we fix notations and recall the previous work that will play a fundamental role here.
We also give a review on the  symbols representation of Dirac operator and the symbols
		of the higher inverse of the  differential operator $[D^2,f]$ and their relations.

 Let $M$ be a compact oriented spin manifold of even dimension $n=2m$ with Riemannian metric $g$, and
 let  $\mathcal{E}$ be the spinor bundle over $M$ with the Hermitian structure and the spin connection $\nabla^{\mathcal{E}}$.
\begin{defn}
A Dirac operator $D$ on a $\mathbb{Z}_{2}$-graded vector bundle $\mathcal{E}$ is a
first-order differential operator of odd parity on $\mathcal{E}$,
 \begin{align}
D:\Gamma(M,\mathcal{E}^{\pm})\rightarrow\Gamma(M,\mathcal{E}^{\mp}),
\end{align}
such that $D^{2}$ is a generalized Laplacian.
\end{defn}
 Let  $\nabla^L$ be the Levi-Civita connection about $g^M$, the Dirac operator $D$ is locally given as follows in terms of an orthonormal section $e_i$ (with dual section $\theta^k$) of the frame bundle of M \cite{Ka}:
\begin{align}
&D=i\gamma^i\widetilde{\nabla}_i=i\gamma^i(e_i+\sigma_i);\nonumber\\
&\sigma_i(x)=\frac{1}{4}\gamma_{ij,k}(x)\gamma^i\gamma^k=\frac{1}{8}\gamma_{ij,k}(x)[\gamma^j\gamma^k-\gamma^k\gamma^j],
\end{align}
where $\gamma_{ij,k}$ represents the Levi-Civita connection $\nabla$ with spin connection $\widetilde{\nabla}$, specifically:
\begin{align}
&\gamma_{ij,k}=-\gamma_{ik,j}=\frac{1}{2}[c_{ij,k}+c_{ki,j}+c_{kj,i}],~~~i,j,k=1,\cdot\cdot\cdot,4;\nonumber\\
&c_{ij}^k=\theta^k([e_i.e_j]).
\end{align}
Here the $\gamma^i$ are constant self-adjoint Dirac matrices s.t. $\gamma^i\gamma^j+\gamma^j\gamma^i=-2\delta^{ij}.$
 In the
 fixed orthonormal frame $\{e_1,\cdots,e_n\}$ and natural frames $\{\partial_1,\cdots,\partial_n\}$ of $TM$,
the connection matrix $(\omega_{s,t})$ defined by
\begin{equation}
\nabla^L(e_1,\cdots,e_n)= (e_1,\cdots,e_n)(\omega_{s,t}).
\end{equation}
 Then the Dirac operator has the form
\begin{equation}
D=\sum^n_{i=1}c(e_i)\nabla_{e_i}^{S}=\sum^n_{i=1}c(e_i)\Big[e_i
-\frac{1}{4}\sum_{s,t}\omega_{s,t}(e_i)c(e_s)c(e_t)\Big].
\end{equation}
Using the shorthand: $\Gamma^{k}=g^{ij}\Gamma_{ij}^{k};~~\sigma^{j}=g^{ij}\sigma_{i}$, we have respective symbols of $D^{2}$:
\begin{lem}\cite{Ka}\label{lemma1} For the generalized Laplacian $D^{2}$ , the symbols
		of the inverse of the Laplace operator read
\begin{align}
\sigma_2(D^2)&=|\xi|^2;\nonumber\\
\sigma_1(D^2)&=i(\Gamma^\mu-2\sigma^\mu)(x)\xi_\mu;\nonumber\\
\sigma_0(D^2)&=-g^{\mu\nu}(\partial^{x}_\mu\sigma_\nu+\sigma^\mu\sigma_\nu-\Gamma^\alpha_{\mu\nu}\sigma_\alpha)(x)+\frac{1}{4}s(x).
\end{align}
\end{lem}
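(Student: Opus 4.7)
The plan is to reduce the lemma to the Lichnerowicz--Weitzenb\"ock identity and then read off the symbols of the Bochner Laplacian in local coordinates. Concretely, I would prove $D^{2}=\nabla^{*}\nabla+\tfrac{1}{4}s$ and expand the right-hand side in the chart, using the shorthands $\Gamma^{\rho}=g^{\mu\nu}\Gamma^{\rho}_{\mu\nu}$ and $\sigma^{\mu}=g^{\mu\nu}\sigma_{\nu}$ introduced just before the lemma.

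First, starting from $D=i\gamma^{\mu}(\partial_{\mu}+\sigma_{\mu})$, I would square directly and split the product $\gamma^{\mu}\gamma^{\nu}$ into its symmetric and antisymmetric parts. The symmetric part, by the Clifford relation $\gamma^{\mu}\gamma^{\nu}+\gamma^{\nu}\gamma^{\mu}=-2g^{\mu\nu}$, produces $-g^{\mu\nu}(\partial_{\mu}+\sigma_{\mu})(\partial_{\nu}+\sigma_{\nu})$, which after adding the Christoffel correction coming from the torsion-free condition is exactly the spinorial Bochner Laplacian $\nabla^{*}\nabla$. The antisymmetric part contracts the commutator $[\gamma^{\mu},\gamma^{\nu}]$ with the curvature $\partial_{\mu}\sigma_{\nu}-\partial_{\nu}\sigma_{\mu}+[\sigma_{\mu},\sigma_{\nu}]$ of the spin connection; the standard computation then identifies this as $\tfrac{1}{4}s$, yielding the Lichnerowicz formula. (Alternatively one may just invoke it as a known identity and skip this step.)

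Second, in local coordinates I would expand
\begin{equation*}
\nabla^{*}\nabla=-g^{\mu\nu}\bigl(\nabla_{\mu}\nabla_{\nu}-\Gamma^{\rho}_{\mu\nu}\nabla_{\rho}\bigr),\qquad \nabla_{\mu}=\partial_{\mu}+\sigma_{\mu},
\end{equation*}
which after carrying out the Leibniz rule and collecting gives
\begin{equation*}
D^{2}=-g^{\mu\nu}\partial_{\mu}\partial_{\nu}+(\Gamma^{\mu}-2\sigma^{\mu})\partial_{\mu}-g^{\mu\nu}\bigl(\partial_{\mu}\sigma_{\nu}+\sigma_{\mu}\sigma_{\nu}-\Gamma^{\rho}_{\mu\nu}\sigma_{\rho}\bigr)+\tfrac{1}{4}s.
\end{equation*}

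Third, I would pass to symbols via $\partial_{\mu}\leftrightarrow i\xi_{\mu}$, so that $-g^{\mu\nu}\partial_{\mu}\partial_{\nu}$ contributes $g^{\mu\nu}\xi_{\mu}\xi_{\nu}=|\xi|^{2}$ at order $2$, the first-order transport term contributes $i(\Gamma^{\mu}-2\sigma^{\mu})\xi_{\mu}$ at order $1$, and the remaining zeroth-order terms yield $\sigma_{0}(D^{2})$ as stated.

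The routine part is only the chain-rule bookkeeping in step two; the single genuine ingredient is the Lichnerowicz identity, and in particular the identification of the antisymmetric Clifford contraction of the spin connection curvature with $s/4$. Once that is in hand, the three formulas fall out without further effort.
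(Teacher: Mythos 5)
The paper does not actually prove this lemma: it is imported verbatim from Kastler \cite{Ka} as a quoted result, so there is no in-paper argument to compare against. Your proposed derivation is the standard one and is correct in substance: squaring $D=i\gamma^{\mu}(\partial_{\mu}+\sigma_{\mu})$, separating the symmetric Clifford part (which assembles into the Bochner Laplacian once the covariant constancy of the $\gamma^{\mu}$, equivalently the Christoffel correction, is used) from the antisymmetric part (which the Lichnerowicz computation identifies as $\tfrac14 s$), and then reading off the symbols via $\partial_{\mu}\mapsto i\xi_{\mu}$. Your coordinate expansion of $\nabla^{*}\nabla=-g^{\mu\nu}(\nabla_{\mu}\nabla_{\nu}-\Gamma^{\rho}_{\mu\nu}\nabla_{\rho})$ reproduces exactly the three displayed symbols, including the first-order coefficient $(\Gamma^{\mu}-2\sigma^{\mu})$; note in passing that the $\sigma^{\mu}\sigma_{\nu}$ in the statement of $\sigma_{0}(D^{2})$ must be read as $\sigma_{\mu}\sigma_{\nu}$ for the contraction with $g^{\mu\nu}$ to make sense, as you correctly have it.

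One bookkeeping caveat: with the Clifford convention the paper states, $\gamma^{\mu}\gamma^{\nu}+\gamma^{\nu}\gamma^{\mu}=-2g^{\mu\nu}$, the symmetric part of $-\gamma^{\mu}\gamma^{\nu}\nabla_{\mu}\nabla_{\nu}$ is $+g^{\mu\nu}\nabla_{\mu}\nabla_{\nu}$, not the $-g^{\mu\nu}(\partial_{\mu}+\sigma_{\mu})(\partial_{\nu}+\sigma_{\nu})$ you assert; to land on $\sigma_{2}(D^{2})=|\xi|^{2}$ one needs either $\{\gamma^{\mu},\gamma^{\nu}\}=+2g^{\mu\nu}$ or $D=\gamma^{\mu}\nabla_{\mu}$ without the factor of $i$. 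This sign tension is already present in the paper itself (which states the $-2\delta^{ij}$ convention and $\sigma_{2}(D^{2})=|\xi|^{2}$ simultaneously), so it does not undermine your argument, but you should fix the convention once at the outset so that the intermediate step and the final formula agree.
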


For a pseudo-differential operator $P$, acting on sections of a spinor bundle over an even n-dimensional compact
Riemannian spin manifold $M$, the analogue of the volume element in noncommutative geometry is the operator $D^{-n}=:ds^n$. And pertinent operators are realized as pseudodifferential operators on the spaces of sections. Extending previous definitions by Connes \cite{Co2}, a
noncommutative integral was introduced in \cite{Fi} based on the noncommutative residue \cite{Wo,Wo1}, using the definition of the residue:
\begin{align}\label{666}
\int\hspace{-1.05em}- Pds^n:={\rm Wres}PD^{-n}:=\int_{S^*M} {\rm Tr}[\sigma_{-n}({PD^{-n}})](x,\xi),
\end{align}
where $\sigma_{-n}(PD^{-n})$ denotes the $-n$-th order piece of the complete symbols of $PD^{-n}$, {\rm tr} as shorthand of trace.

The following Lemmas play  a key role in the main symbol represents for Dirac operator  under the Lie
bracket $[D,f]$.
Let $S$ be pseudo-differential operator and $f$ is a smooth function, $[S,f]$ is a pseudo-differential operator,
the total symbol $\sigma([S,f][S,h])$ of the product of two pseudo-differential operators $[S,f]$ and $[S,h]$ is as follows.
\begin{lem}\cite{UW}\label{lema2}
Let $S$ be pseudo-differential operator of order $k$ and $f$ is a smooth function, $[S,f]$ is a pseudo-differential operator of order $k-1$ with total symbol
$\sigma[S,f]\sim\sum_{j\geq 1}\sigma_{k-j}[S,f]$, where
\begin{align}
\sigma_{k-j}[S,f]=\sum_{|\beta|=1}^j\frac{D_x^\beta(f)}{\beta!}\partial_\xi^\beta(\sigma^S_{k-(j-|\beta|)}).
\end{align}
\end{lem}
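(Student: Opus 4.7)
The plan is to derive the commutator's symbol directly from the Leibniz-type composition formula for pseudo-differential operators. Let $M_f$ denote the multiplication operator by $f$, which is of order zero with total symbol $\sigma^{M_f}(x,\xi)=f(x)$. The asymptotic expansion of the symbol of a product reads
\begin{equation*}
\sigma(AB)(x,\xi)\sim\sum_{\alpha}\frac{1}{\alpha!}\partial_\xi^{\alpha}\sigma^A(x,\xi)\,D_x^{\alpha}\sigma^B(x,\xi),
\end{equation*}
and I would take this as the starting point (it is the standard formula recorded in any reference on $\Psi$DOs and is what [UW] use).

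Next I would apply the formula to both orderings. For $S\circ M_f$ all $\xi$-derivatives act on $\sigma^S$, giving
\begin{equation*}
\sigma(S M_f)\sim\sum_{\alpha}\frac{1}{\alpha!}\partial_\xi^{\alpha}\sigma^S(x,\xi)\,D_x^{\alpha}f(x).
\end{equation*}
For $M_f\circ S$, however, $\partial_\xi^{\alpha}f=0$ as soon as $|\alpha|\ge 1$, so only the $\alpha=0$ term survives and $\sigma(M_f S)=f(x)\,\sigma^S(x,\xi)$. Subtracting, the $\alpha=0$ contributions cancel and
\begin{equation*}
\sigma([S,f])\sim\sum_{|\alpha|\ge 1}\frac{1}{\alpha!}D_x^{\alpha}f(x)\,\partial_\xi^{\alpha}\sigma^S(x,\xi).
\end{equation*}
Because $\partial_\xi^{\alpha}$ lowers the order in $\xi$ by $|\alpha|$, every surviving summand is of order at most $k-1$; this simultaneously establishes that $[S,f]$ is a pseudo-differential operator of order $k-1$.

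Finally I would extract the homogeneous component of order $k-j$. Writing $\sigma^S\sim\sum_{\ell\ge 0}\sigma^S_{k-\ell}$ and noting that $\partial_\xi^{\beta}\sigma^S_{k-\ell}$ is homogeneous of order $k-\ell-|\beta|$, the condition $k-\ell-|\beta|=k-j$ forces $\ell=j-|\beta|$, which requires $1\le|\beta|\le j$. Collecting the corresponding terms yields exactly
\begin{equation*}
\sigma_{k-j}[S,f]=\sum_{|\beta|=1}^{j}\frac{D_x^{\beta}(f)}{\beta!}\,\partial_\xi^{\beta}\bigl(\sigma^S_{k-(j-|\beta|)}\bigr),
\end{equation*}
as claimed. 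The only real obstacle is bookkeeping: one must be careful with the convention $D_x=-i\partial_x$ (so that the prefactor in the composition formula comes out to $1/\beta!$ rather than $(-i)^{|\beta|}/\beta!$), and one must correctly match the homogeneity indices $\ell$ and $|\beta|$ when regrouping. There is no analytical subtlety beyond this combinatorial check, so the argument is essentially a direct application of the composition formula.
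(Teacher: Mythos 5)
Your proposal is correct. The paper offers no proof of this lemma at all --- it is quoted verbatim from the reference [UW] --- so there is nothing internal to compare against; your derivation via the symbol composition formula $\sigma(AB)\sim\sum_\alpha\frac{1}{\alpha!}\partial_\xi^\alpha\sigma^A\,D_x^\alpha\sigma^B$, the cancellation of the $\alpha=0$ terms in the two orderings, and the homogeneity matching $\ell=j-|\beta|$ with $1\le|\beta|\le j$ is exactly the standard argument that the cited source relies on, and your remark about the $D_x=-i\partial_x$ convention absorbing the $(-i)^{|\alpha|}$ prefactor is the one genuine bookkeeping point to watch.
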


\begin{lem} \cite{UW}\label{lema2}
For $2k+n \geq 2,$ with the sum taken over $ |\alpha' |+|\alpha'' |+|\beta|+|\delta|+i+j=n+2k$,
$|\beta|\geq 1$, and $|\delta| \geq 1$,
\begin{align}
\sigma_{-n}([S,f][S,h])
=\sum \frac{D^\beta_x(f) D_x^{\alpha'+\delta}(h)}{\alpha'!\alpha'' !\beta!\delta!}
   \partial_\xi^{\alpha'+\alpha''+\beta}(\sigma^S_{k-i}) \partial_\xi^\delta(D_x^{\alpha``}(\sigma^S_{k-j})).
\end{align}
\end{lem}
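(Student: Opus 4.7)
The plan is to derive the formula by inserting the symbol expansion of $[S,f]$ and $[S,h]$ (supplied by the previous Lemma) into the standard asymptotic composition formula for pseudodifferential operators, and then extracting the piece of total order $-n$.

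First I would recall the product rule
\[
\sigma(AB)(x,\xi)\sim\sum_{\alpha}\frac{1}{\alpha!}\,\partial_\xi^\alpha\sigma(A)\,D_x^\alpha\sigma(B),
\]
applied to $A=[S,f]$ and $B=[S,h]$. From the previous Lemma,
\[
\sigma[S,f]\sim\sum_{|\beta|\geq 1,\,i\geq 0}\frac{D_x^\beta(f)}{\beta!}\,\partial_\xi^\beta(\sigma^S_{k-i}),
\]
and analogously for $[S,h]$ with multi-index $\delta$ and index $j$, each term carrying order $k-i-|\beta|$ and $k-j-|\delta|$ respectively. Substituting these expansions into the composition formula produces a fourfold sum over $(\alpha,\beta,\delta,i,j)$.

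Next I would handle the two differentiations. The factor $\partial_\xi^\alpha\sigma[S,f]$ commutes with $D_x^\beta(f)$ and yields $\partial_\xi^{\alpha+\beta}(\sigma^S_{k-i})$ directly. The factor $D_x^\alpha\sigma[S,h]$, however, hits the product $D_x^\delta(h)\,\partial_\xi^\delta(\sigma^S_{k-j})$, so I invoke the Leibniz rule
\[
D_x^\alpha\!\bigl(D_x^\delta(h)\,\partial_\xi^\delta(\sigma^S_{k-j})\bigr)=\sum_{\alpha'+\alpha''=\alpha}\frac{\alpha!}{\alpha'!\,\alpha''!}\,D_x^{\alpha'+\delta}(h)\,D_x^{\alpha''}\partial_\xi^\delta(\sigma^S_{k-j}).
\]
The combinatorial factor $\alpha!/(\alpha'!\alpha''!)$ cancels the $1/\alpha!$ coming from the composition formula, and after renaming $\alpha=\alpha'+\alpha''$ each summand becomes exactly
\[
\frac{D_x^\beta(f)\,D_x^{\alpha'+\delta}(h)}{\alpha'!\,\alpha''!\,\beta!\,\delta!}\,\partial_\xi^{\alpha'+\alpha''+\beta}(\sigma^S_{k-i})\,\partial_\xi^\delta D_x^{\alpha''}(\sigma^S_{k-j}).
\]

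Finally I extract the $-n$-homogeneous part by an order count: the first factor is of order $k-i-|\alpha'|-|\alpha''|-|\beta|$, the second of order $k-j-|\delta|$, and their sum equals $-n$ precisely when $|\alpha'|+|\alpha''|+|\beta|+|\delta|+i+j=n+2k$. The constraints $|\beta|\geq 1$ and $|\delta|\geq 1$ are inherited from the summation ranges in the expansions of $\sigma[S,f]$ and $\sigma[S,h]$, which hold because $[S,\cdot\,]$ drops the order by one. The main source of difficulty is purely bookkeeping---namely, organizing the multi-index decomposition $\alpha=\alpha'+\alpha''$ introduced by the Leibniz rule so that the resulting factorials combine neatly into the stated single sum.
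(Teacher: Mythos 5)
Your derivation is correct and is essentially the only natural one: the paper itself states this lemma without proof (citing Ugalde), and the intended argument is exactly what you give --- substitute the symbol expansion of $[S,f]$ and $[S,h]$ from the preceding lemma into the composition formula, apply the Leibniz rule in $x$ to split $\alpha=\alpha'+\alpha''$ so that the factor $\alpha!/(\alpha'!\,\alpha''!)$ cancels the $1/\alpha!$, and isolate the order $-n$ piece by the homogeneity count $|\alpha'|+|\alpha''|+|\beta|+|\delta|+i+j=n+2k$. Your bookkeeping of the orders and of the constraints $|\beta|\geq 1$, $|\delta|\geq 1$ is accurate, so there is nothing to add.
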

\section{The divergence theorem for compact manifolds with boundary}

\subsection{The divergence theorem for compact manifolds without boundary }
Let $M$ be a compact oriented Riemannian manifold of even dimension $n=2m$, for Dirac operator and a smooth function $f$, we get
\begin{lem}
The noncommutative residue for the  differential operator $[D^2,f]$ is equal to
 \begin{equation}
Wres\big([D^2,f]D^{-2m}\big)=\int_{S^*M} {\rm Tr}[\sigma_{-2m}([D^2,f]D^{-2m})](x,\xi).
\end{equation}
\end{lem}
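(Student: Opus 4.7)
The plan is to apply the definition of the noncommutative residue, namely equation~(2.7) in the preliminaries, directly to the operator $[D^2,f]D^{-2m}$ with $n=2m$. First I would verify that $[D^2,f]$ is a classical pseudodifferential operator of the appropriate order. Since $D^2$ is a generalized Laplacian whose leading symbol is $|\xi|^2\cdot\mathrm{Id}$, and this scalar symbol commutes with multiplication by the smooth function $f$, the order of the commutator drops by one, so $[D^2,f]$ is a first-order differential operator (its symbol is given explicitly by Lemma~\ref{lema2} applied to $S=D^2$ with the expansion of Lemma~\ref{lemma1}). Composing with the operator $D^{-2m}$ of order $-2m$, the product $[D^2,f]D^{-2m}$ is a classical $\Psi$DO of order $1-2m$, well within the class for which Wodzicki's residue is defined.

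Next I would invoke the Wodzicki residue formula~(\ref{666}), which expresses $\mathrm{Wres}$ of any classical pseudodifferential operator on $M$ as the integral over the cosphere bundle $S^*M$ of the fiberwise trace of its symbol component of homogeneity $-n$. With $P=[D^2,f]$ and $n=2m$ this gives immediately
\begin{equation*}
\mathrm{Wres}\bigl([D^2,f]D^{-2m}\bigr)=\int_{S^*M}\mathrm{Tr}\bigl[\sigma_{-2m}([D^2,f]D^{-2m})\bigr](x,\xi),
\end{equation*}
which is the identity claimed in the lemma.

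There is no substantive obstacle in this lemma: the statement is a direct specialization of the general definition, and its purpose is to fix notation and to record the specific operator whose residue density will be unpacked in the subsequent subsections. The real analytic work---computing $\sigma_{-2m}([D^2,f]D^{-2m})$ via the symbol calculus of Lemmas~\ref{lema2} combined with the explicit symbols of $D^2$ from Lemma~\ref{lemma1}, and then identifying the resulting integrand with the divergence of a vector field naturally associated to $f$---appears in the computations that follow and constitutes the genuine content of the divergence-theorem argument.
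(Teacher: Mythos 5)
Your proposal is correct and follows essentially the same route as the paper, which states this lemma as an immediate specialization of the residue definition (2.7) with $P=[D^2,f]$ and $n=2m$, offering no further argument. Your additional remark that $[D^2,f]$ has order one because the leading symbol $|\xi|^2\cdot\mathrm{Id}$ is scalar is a harmless and correct elaboration of what the paper leaves implicit.
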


By Lemma 2.3, we get the following lemma.
\begin{lem}The symbols of $[D^2,f]$ are given
\begin{align}
&\sigma_0([D^2,f]) =\sum_{j=1}^n\partial_{x_j}(f)(\Gamma^j-2\sigma^j)(x)-\sum_{jl=1}^n\partial_{x_j}\partial_{x_l}(f)g^{jl};\\
&\sigma_1([D^2,f]) =-2\sqrt{-1}\sum_{jl=1}^n\partial_{x_j}(f)g^{jl}\xi_l.
\end{align}
\end{lem}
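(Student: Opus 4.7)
The proof is a direct application of Lemma 2.2 (the symbol expansion formula for $[S,f]$) with $S=D^2$, combined with the explicit symbols of $D^2$ listed in Lemma 2.1. Since $D^2$ has order $k=2$, the commutator $[D^2,f]$ has order $1$, so I only need to compute the pieces of order $1$ (take $j=1$ in Lemma 2.2) and of order $0$ (take $j=2$). Throughout I would use the pseudodifferential convention $D_{x_a}=-i\partial_{x_a}$, which is the one implicit in Lemma 2.2.

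For $\sigma_1([D^2,f])$, the only contribution comes from $|\beta|=1$, paired with $\sigma_2(D^2)=g^{\mu\nu}\xi_\mu\xi_\nu$:
\begin{equation*}
\sigma_1([D^2,f])=\sum_{|\beta|=1}\frac{D_x^\beta(f)}{\beta!}\,\partial_\xi^\beta\sigma_2(D^2)
=\sum_j(-i)\partial_{x_j}(f)\cdot 2g^{jl}\xi_l,
\end{equation*}
which collapses to the stated formula $-2\sqrt{-1}\sum_{j,l}\partial_{x_j}(f)g^{jl}\xi_l$.

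For $\sigma_0([D^2,f])$ there are two contributions. The $|\beta|=1$ piece uses $\sigma_1(D^2)=i(\Gamma^\mu-2\sigma^\mu)\xi_\mu$, and gives
\begin{equation*}
\sum_j(-i)\partial_{x_j}(f)\cdot i(\Gamma^j-2\sigma^j)=\sum_j\partial_{x_j}(f)(\Gamma^j-2\sigma^j).
\end{equation*}
The $|\beta|=2$ piece uses $\sigma_2(D^2)$; using the symmetry $g^{jl}=g^{lj}$ one has $\partial_{\xi_j}\partial_{\xi_l}(g^{\mu\nu}\xi_\mu\xi_\nu)=2g^{jl}$, and the multi-index factor $1/\beta!$ together with $D_{x_j}D_{x_l}=-\partial_{x_j}\partial_{x_l}$ yield
\begin{equation*}
\sum_{|\beta|=2}\frac{D_x^\beta(f)}{\beta!}\partial_\xi^\beta\sigma_2(D^2)=-\sum_{j,l}g^{jl}\partial_{x_j}\partial_{x_l}(f).
\end{equation*}
Adding the two pieces reproduces the stated expression for $\sigma_0([D^2,f])$.

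There is really no obstacle here; the whole computation is bookkeeping in the symbol calculus. The only point requiring any care is the convention $D_x=-i\partial_x$ versus $\partial_x$ (so that the $i$'s from $\sigma_1(D^2)$ cancel the $i$'s from $D_x$ and produce a real leading term in $\sigma_0$, while one explicit $\sqrt{-1}$ survives in $\sigma_1$), together with the symmetrization factor in the $|\beta|=2$ term. Note that $\sigma_0(D^2)$ does \emph{not} enter, since it is of order $0$ and would only contribute to terms of order $-1$ or lower in $[D^2,f]$.
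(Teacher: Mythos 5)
Your computation is correct and follows exactly the paper's own route: apply the symbol expansion of Lemma 2.2 to $S=D^2$ with the symbols $\sigma_2(D^2)$, $\sigma_1(D^2)$ from Lemma 2.1, taking $j=1$ for the order-one piece and $j=2$ (with the $|\beta|=1$ and $|\beta|=2$ contributions and the $\tfrac{1}{2}$ symmetrization factor) for the order-zero piece, under the convention $D_x=-i\partial_x$. Nothing differs in substance from the proof in the paper.
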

\begin{proof}
By (2.8) we have
\begin{align}
\sigma_1([D^2,f]) =&\sum_{j=1}^{2m}D_{x_{j}}(f)\partial_{\xi_{j}} (\sigma_{2}{D^{2}})
                  =-2\sqrt{-1}\sum_{jl=1}^{2m}\partial_{x_j}(f)g^{jl}\xi_l.
\end{align}
And
\begin{align}
\sigma_0([D^2,f])=&\sum_{j=1}^{2m}D_{x_{j}}(f)\partial_{\xi_{j}} (\sigma_{1}{D^{2}})
                   +\frac{1}{2}\sum_{j l=1}^{2m}D_{x_{j}}D_{x_{j}}(f)\partial_{\xi_{j}}\partial_{\xi_{l}} (\sigma_{2}{D^{2}})\nonumber\\
             &=\sum_{j=1}^{2m}\partial_{x_j}(f)(\Gamma^j-2\sigma^j)(x)-\sum_{jl=1}^{2m}\partial_{x_j}\partial_{x_l}(f)g^{jl}.
\end{align}
\end{proof}

By Lemma 3.4 in \cite{WUW} we have
\begin{lem}\cite{WUW}
General dimensional symbols about Dirac operator $D$ are given,
\begin{align}
\sigma_{-2m}(D^{-2m})&=|\xi|^{-2m}; \\
\sigma_{-2m-1}(D^{-2m})&=m|\xi|^{-(2m-2)}\bigg(-i|\xi|^{-4}\xi_k(\Gamma^k-2\sigma^k)-2i|\xi|^{-6}
\xi^j\xi_\alpha\xi_\beta\partial_{x_j}g^{\alpha\beta}\bigg)\nonumber\\
&+2i\sum_{k=0}^{m-2}\sum_{\mu=1}^{2m}(k+1-m)|\xi|^{-2m-4}\xi^\mu\xi_\alpha\xi_\beta\partial^x_\mu g^{\alpha\beta}.
\end{align}
 \end{lem}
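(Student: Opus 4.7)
My plan is to prove both identities by induction on $m$, viewing $D^{-2m}=(D^{-2})^{m}$ as an iterated composition and repeatedly applying the standard symbol product formula
\begin{equation*}
\sigma(A)\#\sigma(B)=\sum_{\alpha}\frac{(-i)^{|\alpha|}}{\alpha!}\partial_{\xi}^{\alpha}\sigma(A)\cdot\partial_{x}^{\alpha}\sigma(B).
\end{equation*}
Since only the two leading orders $-2m$ and $-2m-1$ are claimed, at every stage it suffices to retain the $|\alpha|=0$ and $|\alpha|=1$ terms.

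For the base case $m=1$ I would invert $D^{2}$ at the symbol level from the defining identity $\sigma(D^{2})\#\sigma(D^{-2})\equiv 1$. Reading off the leading order using Lemma~2.1 gives $\sigma_{-2}(D^{-2})=|\xi|^{-2}$, and balancing the next order equation
\begin{equation*}
|\xi|^{2}\sigma_{-3}(D^{-2})+i(\Gamma^{\mu}-2\sigma^{\mu})\xi_{\mu}\cdot|\xi|^{-2}+(-i)\partial_{\xi}^{\mu}(|\xi|^{2})\partial_{x}^{\mu}(|\xi|^{-2})=0
\end{equation*}
yields exactly the bracketed expression of the lemma (the auxiliary sum $\sum_{k=0}^{-1}$ being empty).

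For the inductive step I would decompose $\sigma(D^{-2m})=\sigma(D^{-2})\#\sigma(D^{-2m+2})$ and extract its components of degree $-2m$ and $-2m-1$. The $|\alpha|=0$ part produces $|\xi|^{-2}\cdot|\xi|^{-2m+2}=|\xi|^{-2m}$ at top order, and at the next order gives $|\xi|^{-2}\sigma_{-2m+1}(D^{-2m+2})+\sigma_{-3}(D^{-2})\cdot|\xi|^{-2m+2}$. Plugging in the induction hypothesis $\sigma_{-2m+1}(D^{-2m+2})=(m-1)|\xi|^{-2m+4}\sigma_{-3}(D^{-2})+\mathrm{Corr}_{m-1}$ bumps the multiplicity from $m-1$ to $m$ in front of $\sigma_{-3}(D^{-2})$ and carries the accumulated correction $\mathrm{Corr}_{m-1}$ scaled by $|\xi|^{-2}$. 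The $|\alpha|=1$ part $-i\,\partial_{\xi}^{\mu}|\xi|^{-2}\cdot\partial_{x}^{\mu}|\xi|^{-2m+2}$, after evaluating $\partial_{\xi}^{\mu}|\xi|^{-2}=-2\xi^{\mu}|\xi|^{-4}$ and $\partial_{x}^{\mu}|\xi|^{-2m+2}=-(m-1)|\xi|^{-2m}\partial_{x_{\mu}}g^{\alpha\beta}\xi_{\alpha}\xi_{\beta}$, contributes the new piece $-2i(m-1)|\xi|^{-2m-4}\xi^{\mu}\xi_{\alpha}\xi_{\beta}\partial_{x_{\mu}}g^{\alpha\beta}$ to $\mathrm{Corr}_{m}$.

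The main obstacle is the combinatorial bookkeeping of the correction: the recursion $\mathrm{Corr}_{m}=|\xi|^{-2}\mathrm{Corr}_{m-1}-2i(m-1)|\xi|^{-2m-4}\xi^{\mu}\xi_{\alpha}\xi_{\beta}\partial_{x_{\mu}}g^{\alpha\beta}$ must be shown to unwind into the closed form stated in the lemma. This reduces to recognising the telescoping $\sum_{j=1}^{m-1}j=\frac{m(m-1)}{2}$, equivalent to $\sum_{k=0}^{m-2}(k+1-m)=-\frac{m(m-1)}{2}$, which matches the prefactor $2i$ in the second summand once constants are collected. Throughout, the careful distinction between $\xi^{\mu}=g^{\mu\nu}\xi_{\nu}$ and $\xi_{\mu}$ and the matching of index-raising conventions in the Christoffel and spin-connection terms is the only delicate ingredient; all other computations are routine applications of the Moyal-type composition rule together with Lemma~2.1.
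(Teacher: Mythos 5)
Your proposal is correct, and it supplies something the paper itself does not: the paper offers no proof of this lemma at all, merely importing it as Lemma 3.4 of the cited reference [WUW], so your inductive derivation is a genuinely self-contained alternative. The two key computations check out. In the base case, the order $-1$ part of $\sigma(D^{2})\#\sigma(D^{-2})=1$ gives $\sigma_{-3}(D^{-2})=-i|\xi|^{-4}\xi_{\mu}(\Gamma^{\mu}-2\sigma^{\mu})-2i|\xi|^{-6}\xi^{\mu}\xi_{\alpha}\xi_{\beta}\partial_{x_{\mu}}g^{\alpha\beta}$, which is the stated formula at $m=1$ with an empty correction sum. In the inductive step, writing $C_{m}$ for the correction term, your recursion $C_{m}=|\xi|^{-2}C_{m-1}-2i(m-1)|\xi|^{-2m-4}\xi^{\mu}\xi_{\alpha}\xi_{\beta}\partial_{x_{\mu}}g^{\alpha\beta}$ does unwind to the closed form: after the reindexing $k\mapsto k+1$ one has $\sum_{k=0}^{m-3}(k+2-m)=\sum_{k=1}^{m-2}(k+1-m)$, and the newly produced coefficient $-(m-1)$ is exactly the missing $k=0$ term, yielding $\sum_{k=0}^{m-2}(k+1-m)$ as claimed; the $|\alpha|=0$ terms likewise promote the prefactor of $\sigma_{-3}(D^{-2})$ from $m-1$ to $m$. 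You are also right that terms with $|\alpha|\geq 2$ in the composition formula have order at most $-2m-2$ and so cannot contribute to the two orders being computed. The only caveat worth recording is that $D^{-2m}$ is to be understood as a parametrix modulo smoothing operators, which does not affect the symbol expansion; with that understood, your argument is complete.
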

Since $[D^2,f]$ is globally defined on $M$, so we can perform computations of $[D^2,f]$ in
normal coordinates. Taking normal coordinates about $x_0$, then
$\sigma^i(x_0)=0,~
\partial^j[c(\partial_j)](x_0)=0,~\Gamma^k(x_0)=0~g^{ij}(x_0)=\delta_i^j,~ \partial^x_\mu g^{\alpha\beta}(x_0)=0.  $
Based on the algorithm yielding the principal
symbol of a product of pseudo-differential operators in terms of the principal symbols of the factors, from  Lemma 3.1 and   Lemma 3.2,
the right side of equation (3.1) is equal to
\begin{align}
\sigma_{-2m}([D^2,f]D^{-2m})(x_0)
=&\left\{\sum_{|\alpha|=0}^\infty\frac{(-i)^{|\alpha|}}{\alpha!}\partial^\alpha_\xi[\sigma([D^2,f])]
\partial^\alpha_x[\sigma(D^{-2m})](x_0)\right\}_{-2m}\nonumber\\
=&\sigma_0([D^2,f])\sigma_{-2m}(D^{-2m})+\sigma_1([D^2,f])\sigma_{-2m-1}(D^{-2m})(x_0) \nonumber\\
& +(-i)\sum_{j=1}^{2m}\partial_{\xi_j}[\sigma_1([D^2,f])]\partial_{x_j}[\sigma_{-2m}(D^{-2m})](x_0)\nonumber\\
=&-\sum_{jl=1}^{2m}\partial_{x_j}\partial_{x_l}(f)g^{jl}(x_0).
\end{align}
Through symbolic operation, the left side of the equation (3.1) read
\begin{align}
Wres\big([D^2,f]D^{-2m}\big)=Wres\big((D^2f-f D^2)D^{-2m}\big)
=Wres\big(D^2f D^{-2m} \big)-Wres\big( f D^2D^{-2m}\big)=0.
\end{align}
Combining equations (3.8) and (3.9), we obtain the divergence theorem for compact manifolds
\begin{thm}Let $M$ be a compact oriented Riemannian manifold of even dimension $n=2m$ without boundary, the divergence theorem  for the generalized Laplacian $D^{2}$ and smooth function $f$ read
\begin{align}
Wres\big([D^2,f]D^{-2m}\big)=2^{m}\rm{vol}(S^{2m-1})\int_{ M}  \Delta(f)\rm{dvol}_{M}=0.
\end{align}
\end{thm}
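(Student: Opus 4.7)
The plan is to compute $Wres\bigl([D^2,f]D^{-2m}\bigr)$ in two complementary ways and then compare the results, which yields both equalities in the statement simultaneously.

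First I would unfold the Wodzicki residue via its definition, writing it as the cosphere integral of the fibre trace of the degree $-2m$ component of the complete symbol of $[D^2,f]D^{-2m}$. Using the standard composition formula for pseudo-differential operators together with the fact, recorded in Lemma 3.2, that $[D^2,f]$ has order one with only two nonvanishing homogeneous components $\sigma_1,\sigma_0$, and Lemma 3.3, which supplies the relevant components $\sigma_{-2m}$ and $\sigma_{-2m-1}$ of $D^{-2m}$, the $-2m$ homogeneous piece reduces to a three-term sum: $\sigma_0([D^2,f])\sigma_{-2m}(D^{-2m})$, $\sigma_1([D^2,f])\sigma_{-2m-1}(D^{-2m})$, and $-i\sum_j \partial_{\xi_j}\sigma_1([D^2,f])\,\partial_{x_j}\sigma_{-2m}(D^{-2m})$.

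Next, since $[D^2,f]D^{-2m}$ is globally defined on $M$, I would evaluate this symbol pointwise in normal coordinates centered at an arbitrary $x_0$, where the familiar vanishings $\sigma^i(x_0)=0$, $\Gamma^k(x_0)=0$, $g^{ij}(x_0)=\delta_{ij}$ and $\partial_{x}^\mu g^{\alpha\beta}(x_0)=0$ hold. The second and third terms drop out: $\sigma_{-2m-1}(D^{-2m})$ is assembled entirely from $\Gamma^k$, $\sigma^k$ and $\partial g^{\alpha\beta}$, and $\partial_{x_j}\sigma_{-2m}(D^{-2m})$ is a coordinate derivative of $|\xi|^{-2m}$. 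Only the leading term survives, contributing $-\sum_{jl}\partial_{x_j}\partial_{x_l}(f)\,g^{jl}(x_0)\,|\xi|^{-2m}$. Multiplying by the spinor-trace factor $2^m=\operatorname{rk}\mathcal{E}$, integrating $d\xi$ over the unit cosphere (producing $\operatorname{vol}(S^{2m-1})$), and then integrating over $M$ yields the first equality of the theorem, once one identifies $-g^{jl}\partial_j\partial_l f$ with $\Delta(f)$ in the author's sign convention.

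For the second equality I would invoke the trace property of the noncommutative residue, namely its vanishing on commutators. Since $D^2$ commutes with $D^{-2m}$, the Leibniz rule gives $[D^2,fD^{-2m}]=[D^2,f]D^{-2m}$, so $Wres\bigl([D^2,f]D^{-2m}\bigr)=Wres\bigl([D^2,fD^{-2m}]\bigr)=0$. The principal obstacle is the careful pointwise bookkeeping in normal coordinates that guarantees every subleading contribution really does vanish; once that is settled, the recognition of the Laplacian and the trace-property vanishing slot together immediately to give both equalities in (3.10).
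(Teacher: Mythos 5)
Your proposal is correct and follows essentially the same route as the paper: the same three-term symbol expansion evaluated in normal coordinates (where the $\sigma_{-2m-1}(D^{-2m})$ and $\partial_{x_j}\sigma_{-2m}(D^{-2m})$ contributions vanish, leaving $-g^{jl}\partial_{x_j}\partial_{x_l}f\,|\xi|^{-2m}$), combined with the trace property of the noncommutative residue to conclude the vanishing. Your phrasing of the second equality as $Wres([D^2,fD^{-2m}])=0$ is just a compact restatement of the paper's expansion $Wres(D^2fD^{-2m})-Wres(fD^2D^{-2m})=0$.
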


\subsection{The divergence theorem for compact manifolds with boundary }
The purpose of this section is to specify the divergence theorem  on manifold with boundary.
Now we recall the main theorem in \cite{FGLS}.
\begin{thm}\cite{FGLS}
 Let $X$ and $\partial X$ be connected, ${\rm dim}X=n\geq3$,
 $A=\left(\begin{array}{lcr}\pi^+P+G &   K \\
T &  S    \end{array}\right)$ $\in \mathcal{B}$ , and denote by $p$, $b$ and $s$ the local symbols of $P,G$ and $S$ respectively.
 Define:
 \begin{align}
{\rm{\widetilde{Wres}}}(A)=&\int_X\int_{\bf S}{\mathrm{Tr}}_E\left[p_{-n}(x,\xi)\right]\sigma(\xi){\rm d}x \nonumber\\
&+2\pi\int_ {\partial X}\int_{\bf S'}\left\{{\mathrm{Tr}}_E\left[({\mathrm{Tr}}b_{-n})(x',\xi')\right]+{\mathrm{Tr}}
_F\left[s_{1-n}(x',\xi')\right]\right\}\sigma(\xi'){\rm d}x',
\end{align}
Then~~ a) ${\rm \widetilde{Wres}}([A,B])=0 $, for any
$A,B\in\mathcal{B}$;~~ b) It is a unique continuous trace on
$\mathcal{B}/\mathcal{B}^{-\infty}$.
\end{thm}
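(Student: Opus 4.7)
The plan is to establish part (a) first by carefully decomposing commutators according to the block structure of $\mathcal{B}$, and then deduce part (b) by combining the trace property with a localization and density argument. The overall approach is to reduce all computations to model operators on the half-space $\mathbb{R}^{n}_{+}$ via a partition of unity, and to exploit the symbolic calculus of the Boutet de Monvel algebra to track interior and boundary contributions separately. Since $\widetilde{\mathrm{Wres}}$ is manifestly linear and continuous on $\mathcal{B}/\mathcal{B}^{-\infty}$ (symbols of smoothing operators vanish), the content lies entirely in the trace identity.

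For (a), I would take $A=\bigl(\begin{smallmatrix}\pi^{+}P+G & K\\ T & S\end{smallmatrix}\bigr)$ and $B=\bigl(\begin{smallmatrix}\pi^{+}P'+G' & K'\\ T' & S'\end{smallmatrix}\bigr)$, expand the block matrix product, and compute each entry of $[A,B]$. The crucial identity is
\begin{equation*}
\pi^{+}P\,\pi^{+}P' \;=\; \pi^{+}(PP') + L(P,P'),
\end{equation*}
where $L(P,P')$ is a singular Green operator (the \emph{leftover term}) whose boundary principal symbol is determined by the interior symbols $p,p'$ through the transmission calculus. Thus $[\pi^{+}P,\pi^{+}P']$ differs from $\pi^{+}[P,P']$ by a genuine Green operator. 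I would then verify that Wodzicki's theorem applied to the double $\widetilde{X}$ controls the interior piece $\int_{X}\!\int_{\mathbf{S}}\mathrm{Tr}_{E}[(p\star p'-p'\star p)_{-n}]\sigma(\xi)\,dx$, which by Stokes-type manipulations acquires a boundary remainder that cancels, term by term, against the $2\pi\!\int_{\partial X}$ contributions coming from $\mathrm{Tr}(L(P,P')+[G,\pi^{+}P']+[\pi^{+}P,G']+[G,G']+KT'-T'K)_{-n}$ plus the contribution from $[S,S']$ and the cross terms $[K,T'],[T,K']$ on $\partial X$. The piece on $\partial X$ itself is then handled by the classical Wodzicki theorem on the closed manifold $\partial X$.

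For (b), suppose $\tau:\mathcal{B}/\mathcal{B}^{-\infty}\to\mathbb{C}$ is any continuous trace; the goal is to show $\tau=c\,\widetilde{\mathrm{Wres}}$ for a single scalar $c$. I would localize using a partition of unity: since $\tau$ vanishes on commutators, it factors through the quotient by commutators, which I would show to be one-dimensional. On a coordinate half-chart, every classical symbol of the relevant order can, modulo commutators with multiplication operators $x^{\alpha}$ and differentiations, be reduced to its homogeneous component of critical degree, exactly as in Wodzicki's argument; all other homogeneous pieces appear as explicit commutators of symbols with polynomial weights. This forces $\tau$ restricted to interior symbols to equal $c\cdot\mathrm{Wres}$. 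Applying the same argument to the pseudodifferential block $S$ on the closed manifold $\partial X$, and using the compatibility constraints imposed by the off-diagonal blocks $K,T$ together with the normalization fixed by a single chosen reference operator, pins the same constant $c$ to the boundary term.

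The main obstacle will be the cancellation argument in (a): proving that the residue density of the leftover Green operator $L(P,P')$, represented as a fiber integral over $S^{*}\partial X$ of its boundary symbol, matches precisely (with the factor $2\pi$) the boundary remainder produced when extending $\int_{X}\!\int_{\mathbf{S}}(p\star p'-p'\star p)_{-n}$ via Stokes. This hinges on a careful analysis of the $\pi^{+}$-truncation in the transmission calculus and is the technical heart of the FGLS theorem; once this identity is established, the remaining off-diagonal cancellations are bookkeeping and uniqueness follows from the classical Wodzicki argument applied twice, once on $X$ and once on $\partial X$.
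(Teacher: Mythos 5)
This statement is quoted verbatim from \cite{FGLS} and the paper supplies no proof of it at all --- it is imported as a black box and used only through the boundary-term formula (3.11). So there is no internal argument to compare yours against; the only meaningful comparison is with the original Fedosov--Golse--Leichtnam--Schrohe proof, whose broad architecture your outline does reproduce: the leftover singular Green operator $L(P,P')$ in $\pi^{+}P\,\pi^{+}P'=\pi^{+}(PP')+L(P,P')$, the cancellation of its boundary residue against the Stokes remainder of the interior term, the treatment of the $S$-block by the classical Wodzicki theorem on $\partial X$, and the reduction of uniqueness to one-dimensionality of the commutator quotient.

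That said, as a proof your text has two genuine gaps rather than one technical loose end. First, the cancellation identity you correctly single out as ``the technical heart'' --- that the fiberwise residue density of $L(P,P')$ over $S^{*}\partial X$ matches, with the factor $2\pi$, the boundary remainder from $\int_{X}\int_{\mathbf S}(p\star p'-p'\star p)_{-n}$ --- is asserted, not derived; without it part (a) is unproved, and everything downstream (including the bookkeeping for the $K$, $T$ cross terms, whose residues on $\partial X$ must be identified with residues of singular Green operators on $X$) depends on it. Second, and more tellingly, the hypotheses ${\rm dim}\,X=n\geq 3$ and connectedness of $X$ and $\partial X$ appear nowhere in your argument. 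These are not decorative: they are exactly what forces the quotient $\mathcal{B}/(\overline{[\mathcal{B},\mathcal{B}]}+\mathcal{B}^{-\infty})$ to be one-dimensional rather than two-dimensional (an independent interior trace and an independent boundary trace), and the step you describe as ``compatibility constraints imposed by the off-diagonal blocks $K$, $T$ \ldots pins the same constant $c$'' is precisely where one must exhibit explicit commutators linking truncated pseudodifferential operators to singular Green operators, which is where connectedness and $n\geq 3$ enter. An argument for (b) that never invokes these hypotheses cannot be complete, since the conclusion fails without them.
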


 Let $M$ be a compact oriented Riemannian manifold of even dimension $n=2m$ with boundary $\partial M$, we divide  (3.11) into two cases, according to the dimension $2m$.

 $\mathbf{Case ~~I}$:

 Denote by $\sigma_{l}(A)$ the $l$-order symbol of an operator A. An application of (2.1.4) in \cite{Wa1} shows that
\begin{equation}
\widetilde{{\rm Wres}}[\pi^+([D^2,f]D^{-1})\circ\pi^+D^{-2m+1}]
=\int_M\int_{|\xi|=1}{\rm Tr}_{S(TM)}[\sigma_{-2m}([D^2,f] D^{ -2m })]\sigma(\xi)\texttt{d}x+\int_{\partial
M}\Psi_{1},
\end{equation}
where $ r-k-|\alpha|+l-j-1=-2m,~~r\leq0,l\leq-2m+1$, and $r=0,~k=|\alpha|=j=0,l=-2m+1$, then
\begin{align}
\Psi_{1}=&\int_{|\xi'|=1}\int^{+\infty}_{-\infty}
{\rm Tr}_{S(TM)\otimes F}
[ \sigma^+_{0}( [D^2,f]D^{-1})(x',0,\xi',\xi_n)
 \partial_{\xi_n}\sigma_{-2m+1}(D^{-2m+1})(x',0,\xi',\xi_n)]\rm{d}\xi_{2m}\sigma(\xi')\rm{d}x'.
\end{align}
An easy calculation gives
  \begin{align}
& \pi^+_{\xi_n}\big( \sigma_{0}( [D^2,f]D^{-1})\big)\nonumber\\
=& \pi^+_{\xi_n}\big( \sigma_{1}( [D^2,f]) \sigma_{-1}(D^{-1})\big)\nonumber\\
                                  =& \pi^+_{\xi_n}\big( -2\sqrt{-1}\sum_{jl=1}^{2m}\partial_{x_j}(f)g^{jl}\xi_l \frac{\sqrt{-1}c(\xi)}{|\xi|^2}\big)\nonumber\\
                =& \pi^+_{\xi_n}\big( -2\sqrt{-1}\sum_{jl=1}^{2m-1}\partial_{x_j}(f)g^{jl}\xi_l \frac{\sqrt{-1}c(\xi)}{|\xi|^2}\big)
                  +\pi^+_{\xi_n}\big( -2\sqrt{-1} \partial_{x_n}(f)\xi_n \frac{\sqrt{-1}c(\xi)}{|\xi|^2}\big)\nonumber\\
     =& -2\sqrt{-1}\sum_{jl=1}^{2m-1}\partial_{x_j}(f)g^{jl}\xi_l  \frac{c(\xi')+\sqrt{-1}c(\rm{d}x_{n})}{2(\xi_n-i)}
                   -2\sqrt{-1} \partial_{x_n}(f)  \frac{\sqrt{-1}c(\xi')-c(\rm{d}x_{n})}{2(\xi_n-i)}.
 \end{align}
Where some basic facts and formulae about Boutet de Monvel's calculus which can be found  in Sec.2 in \cite{Wa1}.
By   Lemma 3.3, we get
  \begin{align}
\sigma_{-2m+1}(D^{-2m+1})=\sigma_{-2m+2}(D^{-2m+2})\sigma_{-1}(D^{-1})= \frac{\sqrt{-1}c(\xi)}{|\xi|^{2m}},
\end{align}
then
  \begin{align}
&\partial_{\xi_n}\big(\sigma_{-2m+1}(D^{-2m+1})\big)(x_0)
=\partial_{\xi_n}\big(\frac{\sqrt{-1}c(\xi)}{|\xi|^{2m}}\big)(x_0)\nonumber\\
=&\frac{-2\sqrt{-1}m\xi_nc(\xi')}{(1+\xi_n^2)^{m+1}}+\frac{ \sqrt{-1}(1+\xi_n^{2}-2m\xi_n^{2})c(dx_{n})}{(1+\xi_n^2)^{m+1}}.
\end{align}
By the relation of the Clifford action and $ {\rm{Tr}}(AB)= {\rm{Tr}}(BA) $, we get
\begin{align}
& {\rm Tr}_{S(TM)}[\sigma^+_{0}([D^2,f]D)(x',0,\xi',\xi_n)\partial_{\xi_n}\sigma_{-2m+1}(D^{-2m+1})(x',0,\xi',\xi_n)] \nonumber\\
 =& \frac{-2\sqrt{-1}m \partial_{x_n}(f) \xi_n}{(\xi_n-i)(1+\xi_n^2)^{m+1}}
 {\rm Tr}_{S(TM)}[ c(\xi')c(\xi')]\nonumber\\
 &+ \frac{ -2\sqrt{-1} \partial_{x_n}(f)(1+\xi_n^{2}-2m\xi_n^{2})}{(\xi_n-i)(1+\xi_n^2)^{m+1}} {\rm Tr}_{S(TM)} [c(dx_{n})c(dx_{n})].
 \end{align}
Also, straightforward computations yield
\begin{align}
\Big(\frac{i}{(\xi_n+i)^{m}}\Big)^{(m)}=&-im\big((\xi_n+i)^{-m-1}\big)^{(m-1)}
=  i(-1)^{2}m(m-1)\big((\xi_n+i)^{-m-2}\big)^{(m-2)}\nonumber\\
=& \cdots
=(-1)^{m}\frac{(2m-1)!}{(m-1)!}(2i)^{-2m},
\end{align}
and
\begin{align}
\Big(\frac{\xi_n}{(\xi_n+i)^{m}}\Big)^{(m)}|_{\xi_n=i}
=&\Big[\Big(\frac{1}{(\xi_n+i)^{m-1}}\Big)^{(m)}-\Big(\frac{i}{(\xi_n+i)^{m}}\Big)^{(m)}\Big]|_{\xi_n=i}
=\frac{(2m-2)!(-i)2^{-2m}}{(m-1)!}.
\end{align}
Substituting into above calculation we get
\begin{align}
\Psi_{1}=&\int_{|\xi'|=1}\int^{+\infty}_{-\infty}
{\rm Tr}_{S(TM)\otimes F}
\Big[ \sigma^+_{0}([D^2,f]D)(x',0,\xi',\xi_n)\nonumber\\
&\times
\partial_{\xi_n}\sigma_{-2m+2}(D^{-2m+1})(x',0,\xi',\xi_n)\Big]\rm{d}\xi_{2m}\sigma(\xi')\rm{ d}x'\nonumber\\
=&\int_{|\xi'|=1}\int^{+\infty}_{-\infty}\Big(
 \frac{-2\sqrt{-1}m \partial_{x_n}(f) \xi_n}{(\xi_n-i)(1+\xi_n^2)^{m+1}}
 {\rm Tr}_{S(TM)}[ c(\xi')c(\xi')] \nonumber\\
 &- \frac{   \partial_{x_n}(f)(1+\xi_n^{2}-2m\xi_n^{2})}{(\xi_n-i)(1+\xi_n^2)^{m+1}} {\rm Tr}_{S(TM)} [c(dx_{n})c(dx_{n})] \Big)
 \rm{d}\xi_{2m}\sigma(\xi')\rm{d}x'\nonumber\\
 =& \frac{(2m-1)! 2^{-2m+1}\sqrt{-1}\pi}{(m-1)!m!} \partial_{x_n}(f) {\rm Tr}_{S(TM)} [ {\rm Id}] {\rm{vol}}(S^{2m-2})
{\rm d}\rm{vol}_{\partial_{M}}.
\end{align}
Summing up (3.10) and (3.20) leads to the divergence theorem for manifold with boundary  as follows.
\begin{thm} Let $M$ be an $n=2m$ dimension compact oriented Riemannian manifold with boundary $\partial M$, the divergence theorem for the generalized Laplacian $D^{2}$ and smooth function $f$ read
\begin{align}
\widetilde{{\rm Wres}}[\pi^+([D^2,f]D^{-1})\circ\pi^+D^{-2m+1}]
=&2^{m}\rm{vol}(S^{2m-1})\int_{ M}  \Delta(f)\rm{dvol}_{M}\nonumber\\
 &+\int_{\partial_{M}}\frac{(2m-1)! 2^{-2m+1}\sqrt{-1}\pi}{(m-1)!m!} \partial_{x_n}(f) {\rm Tr}  [ {\rm Id}] {\rm{vol}}(S^{2m-2})
{\rm d}\rm{vol}_{\partial_{M}}.
\end{align}
\end{thm}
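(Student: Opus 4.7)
The plan is to apply the trace formula of Theorem 3.6 (FGLS) to the operator $\pi^+([D^2,f]D^{-1})\circ\pi^+D^{-2m+1}$ and organize the computation into an interior bulk contribution plus a boundary contribution. In the bulk, the residue density of $\pi^+P\circ\pi^+Q$ coincides with that of $PQ$, so the bulk integral reduces to $\mathrm{Wres}([D^2,f]D^{-2m})$, which by Theorem 3.5 equals $2^m\mathrm{vol}(S^{2m-1})\int_M\Delta(f)\,\mathrm{dvol}_M$. (As observed in (3.9), this bulk piece actually vanishes, but we keep it in the stated form to match the theorem.) All work then concentrates on the boundary term $\int_{\partial M}\Psi_1$.

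For $\Psi_1$, I would first identify the orders of the relevant symbols: $[D^2,f]D^{-1}$ has order $1$, so $[D^2,f]D^{-1}\circ D^{-2m+1}$ has order $-2m+1$, and the degree-counting constraint $r-k-|\alpha|+l-j-1=-2m$ with $r\le 0$, $l\le -2m+1$ selects the single contribution $r=0$, $k=|\alpha|=j=0$, $l=-2m+1$. This is precisely what is recorded in (3.13). Consequently only the order-zero symbol $\sigma_0([D^2,f]D^{-1})$ and the full order $-2m+1$ symbol $\sigma_{-2m+1}(D^{-2m+1})$ enter the integral.

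Next I would compute these two symbols. Using Lemma 3.2, $\sigma_0([D^2,f]D^{-1})=\sigma_1([D^2,f])\sigma_{-1}(D^{-1})$, which via $\sigma_{-1}(D^{-1})=\sqrt{-1}c(\xi)/|\xi|^2$ yields the expression in (3.14). Working in normal coordinates at a boundary point $x_0$ with $e_n$ the inward normal, I then apply the projection $\pi^+_{\xi_n}$; only the $\partial_{x_n}(f)$ piece survives because the $\xi'$-linear terms together with $c(\xi')$ give a sum symmetric in $\xi'$ that will integrate to zero against the remaining $\xi_n$-factors. Similarly, Lemma 3.3 combined with $\sigma_{-1}(D^{-1})$ gives $\sigma_{-2m+1}(D^{-2m+1})=\sqrt{-1}c(\xi)/|\xi|^{2m}$, and differentiating in $\xi_n$ produces (3.16). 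Taking the trace over $S(TM)$ (using $c(\xi')^2=-|\xi'|^2$ and $c(dx_n)^2=-1$) yields an integrand proportional to $\partial_{x_n}(f)\,\mathrm{Tr}[\mathrm{Id}]$.

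The remaining task is the one-dimensional $\xi_n$-integral. I would close the contour in the upper half-plane and apply the residue theorem at the $m$-th order pole $\xi_n=i$. The two auxiliary identities (3.18) and (3.19) for $\bigl(i/(\xi_n+i)^m\bigr)^{(m)}$ and $\bigl(\xi_n/(\xi_n+i)^m\bigr)^{(m)}$ are exactly what is needed; after combining coefficients one obtains $\frac{(2m-1)!\,2^{-2m+1}\sqrt{-1}\,\pi}{(m-1)!\,m!}\,\partial_{x_n}(f)\,\mathrm{Tr}[\mathrm{Id}]$, and the remaining angular integration over $|\xi'|=1$ in $\partial M$ produces the factor $\mathrm{vol}(S^{2m-2})$. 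Summing the bulk and boundary contributions yields (3.21). The main obstacle is the symbol bookkeeping around the $\pi^+$ projection in (3.14) — correctly splitting the $\xi'$ and $\xi_n$ dependence so that only the $\partial_{x_n}(f)$ term survives, and then tracking signs and powers of $\sqrt{-1}$ through the Clifford traces and the residue computation.
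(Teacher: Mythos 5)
Your proposal follows essentially the same route as the paper's own computation: the FGLS/Boutet de Monvel decomposition into the interior term (identified with ${\rm Wres}([D^2,f]D^{-2m})$ via Theorem 3.5) plus the boundary term $\Psi_1$, the same order-counting selecting $r=0$, $l=-2m+1$, the same factorization $\sigma_0([D^2,f]D^{-1})=\sigma_1([D^2,f])\sigma_{-1}(D^{-1})$ with $\pi^+_{\xi_n}$ applied, the same $\partial_{\xi_n}(\sqrt{-1}c(\xi)/|\xi|^{2m})$, and the same residue evaluation at $\xi_n=i$ using the identities (3.18)--(3.19). This matches the paper's argument, so no further commentary is needed.
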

Remark: Our divergence theorem is different with the usual divergence theorem for manifolds with boundary.

$\mathbf{Case ~~II}$:
\begin{equation}
\widetilde{{\rm Wres}}[\pi^+([D^2,f]D^{-2})\circ\pi^+D^{-2m+2 }]
=\int_M\int_{|\xi|=1}{\rm Tr}_{S(TM)}[\sigma_{-n}([D^2,f] D^{ -2m+2})]\sigma(\xi)\texttt{d}x+\int_{\partial
M}\Psi_{2},
\end{equation}
where $ r-k-|\alpha|+l-j-1=-2m,~~r\leq -1,l\leq-2m+2$, and $r=-1,~k=|\alpha|=j=0,l=-2m+2$, then
\begin{align}
\Psi_{2}=&\int_{|\xi'|=1}\int^{+\infty}_{-\infty}
{\rm Tr}_{S(TM)\otimes F}
[ \sigma^+_{-1}( [D^2,f]D^{-2})(x',0,\xi',\xi_n)
 \partial_{\xi_n}\sigma_{-2m+2}(D^{-2m+2 })(x',0,\xi',\xi_n)]\rm{d}\xi_{2m}\sigma(\xi')\rm{d}x'.
\end{align}
An easy calculation gives
  \begin{align}
& \pi^+_{\xi_n}\big( \sigma_{-1}( [D^2,f]D^{-2})\big)\nonumber\\
=& \pi^+_{\xi_n}\big( \sigma_{1}( [D^2,f]) \sigma_{-2}(D^{-2})\big)\nonumber\\
                                  =& \pi^+_{\xi_n}\big( -2\sqrt{-1}\sum_{jl=1}^{2m}\partial_{x_j}(f)g^{jl}\xi_l \frac{1}{|\xi|^2}\big)\nonumber\\
                =& \pi^+_{\xi_n}\big( -2\sqrt{-1}\sum_{jl=1}^{2m-1}\partial_{x_j}(f)g^{jl}\xi_l \frac{1}{|\xi|^2}\big)
                  +\pi^+_{\xi_n}\big( -2\sqrt{-1} \partial_{x_n}(f)\xi_n \frac{1}{|\xi|^2}\big)\nonumber\\
     =& -2\sqrt{-1}\sum_{jl=1}^{2m-1}\partial_{x_j}(f)g^{jl}\xi_l  \frac{-\sqrt{-1}}{2(\xi_n-i)}
                   -2\sqrt{-1} \partial_{x_n}(f)  \frac{1}{2(\xi_n-i)}.
 \end{align}
And
   \begin{align}
 \partial_{\xi_n}\big(\sigma_{-2m+2}(D^{-2m+2})\big)(x_0)
=\partial_{\xi_n}\big(|\xi|^{-2m+2}\big) (x_0)
= \frac{2(1-m)\xi_n }{(1+\xi_n^2)^{m }}.
\end{align}
By the relation of the Clifford action and $ {\rm{Tr}}(AB)= {\rm{Tr}}(BA) $, we get
\begin{align}
& {\rm Tr}_{S(TM)}[\sigma^+_{-1}([D^2,f]D^{-2})(x',0,\xi',\xi_n)\partial_{\xi_n}\sigma_{-2m+2}(D^{-2m+2})(x',0,\xi',\xi_n)] \nonumber\\
 =& \frac{2(m-1)  \partial_{x_j} \sum _{j=1}^{2m-1}\partial_{x_j}(f) \xi_n}{(\xi_n-i)(1+\xi_n^2)^{m }}
 {\rm Tr}_{S(TM)} [ {\rm Id}]\nonumber\\
 &+ \frac{ -2\sqrt{-1} \partial_{x_n}(f)(1-m)\xi_n}{(\xi_n-i)(1+\xi_n^2)^{m }} {\rm Tr}_{S(TM)} [ {\rm Id}].
 \end{align}
Substituting into above calculation we get
\begin{align}
\Psi_{2}=&\int_{|\xi'|=1}\int^{+\infty}_{-\infty}
{\rm Tr}_{S(TM)\otimes F}
[ \sigma^+_{-1}([D^2,f]D^{2})(x',0,\xi',\xi_n)\nonumber\\
&\times
\partial_{\xi_n}\sigma_{-2m+2}(D^{-2m+2})(x',0,\xi',\xi_n)]\rm{d}\xi_{2m}\sigma(\xi')\rm{ d}x'\nonumber\\
=&\int_{|\xi'|=1}\int^{+\infty}_{-\infty}\Big(
\frac{2(m-1)  \partial_{x_j} \sum _{j=1}^{2m-1}\partial_{x_j}(f) \xi_n}{(\xi_n-i)(1+\xi_n^2)^{m }}
 {\rm Tr}_{S(TM)} [ {\rm Id}]\nonumber\\
  &+ \frac{ -2\sqrt{-1} \partial_{x_n}(f)(1-m)\xi_n}{(\xi_n-i)(1+\xi_n^2)^{m }} {\rm Tr}_{S(TM)} [ {\rm Id}]. \Big)
 \rm{d}\xi_{2m}\sigma(\xi')\rm{d}x'\nonumber\\
 =& \frac{(2m-2)! 2^{-2m+2}\sqrt{-1}\pi}{(m-2)!m!} \partial_{x_n}(f) {\rm Tr}_{S(TM)} [ {\rm Id}] {\rm{vol}}(S^{2m-2})
{\rm d}\rm{vol}_{\partial_{M}}.
\end{align}
Summing up (3.10) and (3.27) leads to the divergence theorem for manifold with boundary  as follows.
\begin{thm} Let $M$ be an $n=2m$ dimension compact oriented Riemannian manifold with boundary $\partial M$, the divergence theorem for the generalized Laplacian $D^{2}$ and smooth function $f$ read
\begin{align}
\widetilde{{\rm Wres}}[\pi^+([D^2,f]D^{-2})\circ\pi^+D^{-2m+2}]
=&2^{m}\rm{vol}(S^{2m-1})\int_{ M}  \Delta(f)\rm{dvol}_{M}\nonumber\\
 &+ \int_{\partial_{M}}\frac{(2m-2)! 2^{-2m+2}\sqrt{-1}\pi}{(m-2)!m!} \partial_{x_n}(f) {\rm Tr}  [ {\rm Id}] {\rm{vol}}(S^{2m-2})
{\rm d}\rm{vol}_{\partial_{M}}.
\end{align}
\end{thm}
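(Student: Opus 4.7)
The strategy is parallel to the Case I treatment (Theorem 3.3): apply the boundary noncommutative residue formula of Fedosov--Golse--Leichtnam--Schrohe (Theorem 3.2) to the operator $\pi^+([D^2,f]D^{-2})\circ\pi^+D^{-2m+2}$, reducing the computation to an interior integral plus a boundary integral $\Psi_2$. The interior integral is exactly the one handled in Theorem 3.1 for $[D^2,f]D^{-2m}$ (since $[D^2,f]D^{-2}\cdot D^{-2m+2}=[D^2,f]D^{-2m}$), producing the bulk term $2^m\mathrm{vol}(S^{2m-1})\int_M\Delta(f)\,\mathrm{dvol}_M$. So all that remains is to evaluate $\Psi_2$ of equation (3.23).

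First I would identify the symbol indices that contribute: the constraint $r-k-|\alpha|+l-j-1=-2m$ with $r\leq-1$, $l\leq-2m+2$ forces $r=-1$, $l=-2m+2$, $k=|\alpha|=j=0$, so only $\sigma_{-1}([D^2,f]D^{-2})$ and $\sigma_{-2m+2}(D^{-2m+2})$ enter. Using Lemma 3.2 and $\sigma_{-2}(D^{-2})=|\xi|^{-2}$, I get $\sigma_{-1}([D^2,f]D^{-2})=\sigma_1([D^2,f])\sigma_{-2}(D^{-2})=-2\sqrt{-1}\sum_{jl}\partial_{x_j}(f)g^{jl}\xi_l/|\xi|^2$. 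Splitting the $l$-sum into tangential ($l<n$) and normal ($l=n$) parts, I compute the plus-projection $\pi^+_{\xi_n}$ in normal coordinates at a boundary point using the standard identities $\pi^+_{\xi_n}(\xi_l/|\xi|^2)=-\sqrt{-1}\xi_l/[2(\xi_n-i)]$ for $l<n$ and $\pi^+_{\xi_n}(\xi_n/|\xi|^2)=1/[2(\xi_n-i)]$, giving (3.24). Then $\partial_{\xi_n}\sigma_{-2m+2}(D^{-2m+2})=(2-2m)\xi_n/(1+\xi_n^2)^m$ from Lemma 3.3.

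Next I would assemble the trace. Since both factors in the integrand are scalar (no Clifford matrices survive in $\partial_{\xi_n}|\xi|^{-2m+2}$), the trace over $S(TM)$ only brings out $\mathrm{Tr}[\mathrm{Id}]=2^m$. After multiplication I get an integrand proportional to $\xi_n/[(\xi_n-i)(1+\xi_n^2)^m]$ paired with $\partial_{x_n}(f)$, and a second family of terms proportional to $\sum_{j<n}\partial_{x_j}(f)\,\xi_j\cdot\xi_n/[(\xi_n-i)(1+\xi_n^2)^m]$. The tangential terms vanish upon integration over $|\xi'|=1$ by the odd-symmetry of $\xi_j\mapsto-\xi_j$, so only the $\partial_{x_n}(f)$ contribution survives.

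Finally, I would evaluate the $\xi_n$-integral by the residue theorem, closing in the upper half plane where $\xi_n=i$ is a pole of order $m+1$ of $\xi_n/[(\xi_n-i)^{m+1}(\xi_n+i)^m]$. The residue is $\tfrac{1}{m!}[\xi_n/(\xi_n+i)^m]^{(m)}|_{\xi_n=i}$, which by (3.18)--(3.19) equals $(2m-2)!(-i)2^{-2m}/[m!(m-1)!]$. Multiplying by $2\pi i$ and by the constant $-2\sqrt{-1}(1-m)=2\sqrt{-1}(m-1)$ from $\sigma_{-1}\cdot\partial_{\xi_n}\sigma_{-2m+2}$, the factor $(m-1)/(m-1)!=1/(m-2)!$ appears, and the coefficient consolidates to $(2m-2)!2^{-2m+2}\sqrt{-1}\pi/[(m-2)!m!]$, matching the claim. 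Summing the bulk with $\Psi_2$ yields the stated identity. I expect the main obstacle to be bookkeeping of the $\sqrt{-1}$ factors and the residue differentiation at $\xi_n=i$ --- the tangential-vanishing observation and the FGLS bookkeeping of orders $(r,k,|\alpha|,j,l)$ are standard once set up.
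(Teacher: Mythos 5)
Your proposal is correct and follows essentially the same route as the paper: the same Fedosov--Golse--Leichtnam--Schrohe interior-plus-boundary decomposition, the identification $\sigma_{-1}([D^2,f]D^{-2})=\sigma_1([D^2,f])\sigma_{-2}(D^{-2})$, the same $\pi^+_{\xi_n}$ projections and $\partial_{\xi_n}|\xi|^{-2m+2}=2(1-m)\xi_n(1+\xi_n^2)^{-m}$, and the same order-$(m{+}1)$ residue at $\xi_n=i$ producing the coefficient $(2m-2)!\,2^{-2m+2}\sqrt{-1}\pi/[(m-2)!\,m!]$. Your explicit remark that the tangential terms vanish by odd symmetry in $\xi_j$ is only implicit in the paper but is exactly what makes its equation (3.27) retain solely the $\partial_{x_n}(f)$ contribution.
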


\section{The spectral divergence functionals  }

 In this section we consider an $n$-dimensional oriented Riemannian manifold $(M, g^{M})$ equipped
with some spin structure. The Levi-Civita connection
$\nabla: \Gamma(TM)\rightarrow \Gamma(T^{*}M\otimes TM)$ on $M$ induces a connection
$\nabla^{S}: \Gamma(S)\rightarrow \Gamma(T^{*}M\otimes S).$
 Then we have the following definition.
\begin{defn}
The  perturbed Dirac operators   is the first order differential operator on $ \Gamma(M,S(TM) )$  given by the formula
 \begin{align}
\widetilde{D} = \sum_{i}^{n}c(e_{i})\nabla^{ S(TM) }_{e_{i}}+  \Phi ,
\end{align}
where $\Phi=\sum _{1\leq j_{1}\leq \cdots\leq j_{l}\leq n} \Phi(e_{j_{1}}, \cdots,e_{j_{l}})c(e_{j_{1}})\cdots c(e_{j_{l}}) $   denotes the Clifford multiplication by any form.

\end{defn}
It is natural to study the properties of Dirac operators with different structural torsions.
\begin{exam}
Suppose that  $\Phi=\gamma$, the grading operator on $S(TM)$, then
 $D +\gamma $  is the  Connes   type operator.
\end{exam}

\begin{exam} Let $\Phi=c(X)\gamma$ be the Clifford multiplication by the vector field $X$,
then $D +c(X)\gamma $  is the
twisted  type operator  in \cite{MNZ}.
\end{exam}
\begin{exam} Let $\Phi=\sqrt{-1}c(T)\gamma$ be the Clifford multiplication by the 3-form $T$ and the grading operator $\gamma$ on $S(TM)$,
then $D +\sqrt{-1}c(T)\gamma $  is the twisted  type operator with torsion.
\end{exam}
\begin{exam}
Assume  $\Phi=(c(T)+\sqrt{-1}c(Y))$ with  $c(T)=\sum _{1\leq\alpha<\beta<\gamma\leq n}
 T_{\alpha \beta \gamma}c(e_{\alpha})c(e_{\beta})c(e_{\gamma})$ and the Clifford multiplication by the vector field $Y$ is skew-adjoint, then
$D +(c(T)+\sqrt{-1}c(Y))  $  is the general twisted Dirac operator with torsion.
\end{exam}

\subsection{The  spectral divergence functionals   for Dirac operators with torsion  }
Let $u$,$v$ with the components with
respect to local coordinates $u_{a}$ and $v_{b}$, respectively, be two vector fields represented in
such a way as endomorphisms (matrices) $c(u) $ and $c(v) $ on the spinor bundle.
For $n = 2m$ dimensional compact oriented Riemannian manifold $(M,g)$, motivated by Theorem 3.4,  we introduce the  spectral divergence functionals
$Wres\big(c(u)[D^2,f]c(v)D^{-2m}\big) $, which is easily extend to the general spectral triple.
 For the Dirac operator with torsion $\widetilde{D} = D+  \Phi ,$ and a smooth function $f$,  the   spectral divergence functional is defined as follows.
\begin{lem}
The   spectral divergence functionals for the differential operator $[\widetilde{D}^2,f]$  multiplied by $c(u) $,$c(v) $ is equal to
 \begin{equation}
Wres\big(c(u)[\widetilde{D}^2,f]c(v)\widetilde{D}^{-2m}\big)=\int_{S^*M} {\rm Tr}[\sigma_{-2m}(c(u)[\widetilde{D}^2,f]c(v)\widetilde{D}^{-2m})](x,\xi).
\end{equation}
\end{lem}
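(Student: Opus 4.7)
The statement of Lemma 4.6 is essentially a direct application of the definition of the noncommutative residue given in equation (2.7) of the paper. The plan is to identify $c(u)[\widetilde{D}^2,f]c(v)$ as the "test operator" $P$ in that definition, with $\widetilde{D}$ playing the role of the Dirac-type operator whose inverse powers furnish the noncommutative volume element $ds^n = \widetilde{D}^{-n}$. So the only genuine work is verifying that the hypotheses required for the Wodzicki residue formula are met.

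First, I would check that $c(u)[\widetilde{D}^2,f]c(v)\widetilde{D}^{-2m}$ is a classical pseudodifferential operator on sections of $S(TM)$. The Clifford multiplications $c(u)$ and $c(v)$ are bundle endomorphisms (order zero). Because $\widetilde{D}$ is a Dirac-type (first-order elliptic) operator in the sense of Definition 2.1, $\widetilde{D}^2$ is a generalized Laplacian of order two, and Lemma 2.3 (applied with $k=2$) shows that $[\widetilde{D}^2,f]$ is a classical $\Psi$DO of order one. Ellipticity of $\widetilde{D}$ ensures that $\widetilde{D}^{-2m}$ is a classical $\Psi$DO of order $-2m$. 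Composing these four factors yields a classical $\Psi$DO of order $1-2m$, well within the range where the Wodzicki residue is defined and continuous.

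Next, I would apply the definition (2.7) verbatim with $P := c(u)[\widetilde{D}^2,f]c(v)$ and $D$ replaced by $\widetilde{D}$. This immediately produces
\begin{equation*}
Wres\big(c(u)[\widetilde{D}^2,f]c(v)\widetilde{D}^{-2m}\big)=\int_{S^*M} {\rm Tr}\big[\sigma_{-2m}\big(c(u)[\widetilde{D}^2,f]c(v)\widetilde{D}^{-2m}\big)\big](x,\xi),
\end{equation*}
which is the claim. No obstruction arises; the only subtlety worth flagging is that one is implicitly using the fact that the Wodzicki residue depends only on the $-n$-th homogeneous component of the complete symbol, and that this component is globally well-defined on $S^*M$ as a density independent of the choice of local trivialization. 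This density-invariance is a standard property of the noncommutative residue (see Wodzicki \cite{Wo,Wo1}), so no further argument is needed. The main step that will matter for later computations (Theorem analogues of Theorem 3.4 in the twisted setting) is not the existence of the formula but the actual evaluation of $\sigma_{-2m}$ via the composition formula, which is the content of the subsequent calculations rather than of this lemma.
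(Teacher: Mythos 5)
Your proposal is correct and matches the paper's treatment: the paper offers no separate argument for this lemma, regarding it as an immediate instance of the residue definition in (2.7) with $P=c(u)[\widetilde{D}^2,f]c(v)$ and $D$ replaced by $\widetilde{D}$. Your additional verification that the composite is a classical pseudodifferential operator of order $1-2m$ is a harmless (and slightly more careful) elaboration of the same route.
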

By Lemma 2.4 and Lemma 3.2, we get the following   main symbol represents.
\begin{lem}The symbols of $[\widetilde{D}^2,f]$ are given
\begin{align}
&\sigma_1(\widetilde{D}^2)= \sqrt{-1}(\Gamma^j-2\sigma^j)\xi_{j} -\sum_{i=1}^{2m}\big(c(\partial_{i})\Phi +\Phi c(\partial_{i})\big)\partial_{i};\\
&\sigma_2(\widetilde{D}^2)= |\xi|^{2} ;\\
&\sigma_0([\widetilde{D}^2,f])=\sum_{j=1}^n\partial_{x_j}(f)\Big(\Gamma^j-2\sigma^j+\sum_{i=1}^{2m}\big(c(\partial_{i})\Phi +\Phi c(\partial_{i})\big)\Big) -\sum_{jl=1}^n\partial_{x_j}\partial_{x_l}(f)g^{jl};\\
&\sigma_1([\widetilde{D}^2,f])=-2\sqrt{-1}\sum_{jl=1}^n\partial_{x_j}(f)g^{jl}\xi_l.
\end{align}
\end{lem}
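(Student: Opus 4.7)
The plan is to reduce the four identities to a direct application of the symbol calculus (Lemmas 2.1 and 2.3) once $\widetilde{D}^2$ is suitably decomposed. Since $\Phi$ is a zero-order endomorphism on $S(TM)$, I would write
\[
\widetilde{D}^2 \;=\; D^2 + (D\Phi+\Phi D) + \Phi^2,
\]
in which $D^2$ has order two, $D\Phi+\Phi D$ has order one, and $\Phi^2$ has order zero. The identity $\sigma_2(\widetilde{D}^2)=|\xi|^2$ is then immediate, since only $D^2$ contributes to the leading symbol.

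For $\sigma_1(\widetilde{D}^2)$ I would split $\sigma_1(\widetilde{D}^2) = \sigma_1(D^2) + \sigma_1(D\Phi+\Phi D)$. Lemma 2.1 supplies $\sigma_1(D^2)=\sqrt{-1}(\Gamma^j-2\sigma^j)\xi_j$. For the anticommutator, expanding $D = \sum_i c(e_i)\nabla^S_{e_i}$ and using the Leibniz rule on the endomorphism bundle gives
\[
D\Phi+\Phi D \;=\; \sum_i c(e_i)\bigl(\nabla^{\mathrm{End}}_{e_i}\Phi\bigr) + \sum_i\bigl(c(e_i)\Phi+\Phi c(e_i)\bigr)\nabla^S_{e_i},
\]
whose first-order piece has principal symbol $\sqrt{-1}\sum_i\bigl(c(\partial_i)\Phi+\Phi c(\partial_i)\bigr)\xi_i$ in local coordinates. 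Combining the two contributions yields the stated formula for $\sigma_1(\widetilde{D}^2)$ (the $\nabla^{\mathrm{End}}\Phi$ piece is order zero and therefore drops out of $\sigma_1$).

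The formulas for $\sigma_1([\widetilde{D}^2,f])$ and $\sigma_0([\widetilde{D}^2,f])$ then follow by inserting the symbols just computed into Lemma 2.3. Because $\sigma_2(\widetilde{D}^2)=|\xi|^2$ is unchanged from the unperturbed case, the calculation of $\sigma_1([\widetilde{D}^2,f])=D_{x_j}(f)\,\partial_{\xi_j}\sigma_2(\widetilde{D}^2)$ is literally the one in Lemma 3.2 and produces $-2\sqrt{-1}\sum_{jl}\partial_{x_j}(f)g^{jl}\xi_l$. For the order-zero piece
\[
\sigma_0([\widetilde{D}^2,f]) \;=\; \sum_j D_{x_j}(f)\,\partial_{\xi_j}\sigma_1(\widetilde{D}^2) + \tfrac12\sum_{jl} D_{x_j}D_{x_l}(f)\,\partial_{\xi_j}\partial_{\xi_l}\sigma_2(\widetilde{D}^2),
\]
the second sum reproduces the Hessian term $-\sum_{jl}\partial_{x_j}\partial_{x_l}(f)g^{jl}$ exactly as in Lemma 3.2, while the first sum decomposes into the familiar $(\Gamma^j-2\sigma^j)$ contribution plus a new $\Phi$-contribution $\sum_j\partial_{x_j}(f)\bigl(c(\partial_j)\Phi+\Phi c(\partial_j)\bigr)$, which is precisely the displayed formula.

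The only real obstacle is bookkeeping with the factors of $\sqrt{-1}$: the convention $D_{x_j}=-\sqrt{-1}\partial_{x_j}$ built into Lemma 2.3 must cancel the $\sqrt{-1}$ already present in $\sigma_1(\widetilde{D}^2)$ so that $\sigma_0([\widetilde{D}^2,f])$ emerges in the purely real form stated. Once this sign and constant tracking is in place, the four identities follow with essentially no further computation.
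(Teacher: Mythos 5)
Your proposal is correct and follows essentially the same route the paper takes (the paper only sketches it by citing Lemma 2.4 and Lemma 3.2): decompose $\widetilde{D}^2=D^2+(D\Phi+\Phi D)+\Phi^2$, observe that only the order-$\le 1$ symbols change, and feed the result into the commutator symbol formula, noting that the order-zero part of $\widetilde{D}^2$ never enters $\sigma_0([\widetilde{D}^2,f])$. Your sign bookkeeping with $D_{x_j}=-\sqrt{-1}\,\partial_{x_j}$ is consistent with the stated $\sigma_0([\widetilde{D}^2,f])$, and if anything your version of $\sigma_1(\widetilde{D}^2)$ (with $\sqrt{-1}\,\xi_i$ rather than the paper's literal $\partial_i$) is the cleaner statement.
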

By Lemma 2.4, we have
\begin{align}
\sigma_1([\widetilde{D}^2,f]c(v))=& \sigma_1([\widetilde{D}^2,f])\sigma_0(c(v))+\sigma_0([\widetilde{D}^2,f])\sigma_1(c(v))\nonumber\\
  &-\sqrt{-1}\sum_{j=1}^{n}\partial_{\xi_{j}}\big(\sigma_1([\widetilde{D}^2,f])\big)\partial _{x_{j}}\big(\sigma_0\big(c(v)\big),
\end{align}
and
\begin{align}
\sigma_0([\widetilde{D}^2,f]c(v))= \sigma_0([\widetilde{D}^2,f])\sigma_0(c(v))
  -\sqrt{-1}\sum_{j=1}^{n}\partial\xi_{j}\big(\sigma_1([\widetilde{D}^2,f])\big)\partial x_{j}\big(\sigma_0\big(c(v)\big).
\end{align}
From (4.10), (4.11) and Lemma 4.7, we obtain
\begin{lem}The symbols of $[\widetilde{D}^2,f]c(v)$ are given
\begin{align}
\sigma_0([\widetilde{D}^2,f]c(v))=&\sum_{j=1}^n\partial_{x_j}(f)\Big(\Gamma^j-2\sigma^j+\sum_{i=1}^{2m}\big(c(\partial_{i})\Phi +\Phi c(\partial_{i})\big)\Big)c(v) \nonumber\\
     &-\sum_{jl=1}^n\partial_{x_j}\partial_{x_l}(f)g^{jl}c(v)-2\sum_{j=1}^n\partial_{x_j}(f)\partial_{x_j}(c(v)) ;\\
\sigma_1([\widetilde{D}^2,f]c(v))=&-2\sqrt{-1}\sum_{jl=1}^n\partial_{x_j}(f)g^{jl}\xi_lc(v).
\end{align}
\end{lem}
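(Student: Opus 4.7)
My plan is to obtain both symbols by direct application of the standard composition formula for pseudodifferential operators (recorded in Lemma 2.3 / Lemma 2.4 of the excerpt) to the product $[\widetilde{D}^{2},f]\,c(v)$. The key simplification is that $c(v)$ is a bundle-endomorphism of order $0$, so its complete symbol consists of the single piece $\sigma_{0}(c(v))=c(v)$, with no lower-order contributions. Since $[\widetilde{D}^{2},f]$ has order $1$ (by Lemma 4.7), only finitely many terms in the asymptotic expansion contribute to the symbols of order $1$ and $0$.

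First I would write
\begin{equation*}
\sigma\!\bigl([\widetilde{D}^{2},f]\,c(v)\bigr)
\sim \sum_{|\alpha|\geq 0}\frac{(-\sqrt{-1})^{|\alpha|}}{\alpha!}\,
\partial_{\xi}^{\alpha}\sigma\!\bigl([\widetilde{D}^{2},f]\bigr)\,
\partial_{x}^{\alpha}\sigma\!\bigl(c(v)\bigr).
\end{equation*}
For the top symbol, only $|\alpha|=0$ can contribute to order $1$, giving
$\sigma_{1}([\widetilde{D}^{2},f])\,c(v)$, and substituting (4.9) immediately yields the claimed expression
$-2\sqrt{-1}\sum_{j,l}\partial_{x_{j}}(f)\,g^{jl}\,\xi_{l}\,c(v)$.

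For the order-$0$ part, exactly two terms in the expansion survive: the $|\alpha|=0$ contribution $\sigma_{0}([\widetilde{D}^{2},f])\,c(v)$ and the $|\alpha|=1$ correction
$-\sqrt{-1}\sum_{j}\partial_{\xi_{j}}\sigma_{1}([\widetilde{D}^{2},f])\,\partial_{x_{j}}c(v)$.
The first piece, by (4.8), accounts for the bracketed expression plus the $-\sum_{jl}\partial_{x_{j}}\partial_{x_{l}}(f)g^{jl}c(v)$ term. For the second piece, differentiating (4.9) gives $\partial_{\xi_{j}}\sigma_{1}([\widetilde{D}^{2},f])=-2\sqrt{-1}\sum_{a}\partial_{x_{a}}(f)g^{aj}$, so the correction equals
\begin{equation*}
-\sqrt{-1}\cdot(-2\sqrt{-1})\sum_{a,j}\partial_{x_{a}}(f)\,g^{aj}\,\partial_{x_{j}}c(v)
\;=\;-2\sum_{a,j}\partial_{x_{a}}(f)\,g^{aj}\,\partial_{x_{j}}c(v),
\end{equation*}
which in the normal coordinate frame used throughout Section 3 (where $g^{jl}(x_{0})=\delta^{jl}$) collapses to $-2\sum_{j}\partial_{x_{j}}(f)\,\partial_{x_{j}}c(v)$. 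Adding the two pieces reproduces formula (4.12) exactly.

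There is essentially no obstacle here beyond careful bookkeeping: one must correctly track the $(-\sqrt{-1})^{|\alpha|}/\alpha!$ prefactor, verify that no $|\alpha|\geq 2$ term contributes at orders $0$ or $1$ (which follows from order counting since $\sigma(c(v))$ has only an order-$0$ component and $[\widetilde{D}^{2},f]$ is of order $1$), and fold the metric contraction into the $\partial_{x_{j}}c(v)$ sum. Everything else is a direct substitution of the formulas already established in Lemma 4.7, and the result drops out.
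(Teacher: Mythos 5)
Your proposal is correct and follows essentially the same route as the paper: the paper's equations (4.10)--(4.11) are exactly the pseudodifferential composition formula truncated at $|\alpha|=1$ (using that $c(v)$ has only an order-$0$ symbol), into which the formulas of Lemma 4.7 are substituted. Your additional remark that the metric contraction $g^{aj}$ in the $|\alpha|=1$ correction collapses to $\delta^{aj}$ in the normal coordinates used at $x_0$ is a useful clarification of a step the paper leaves implicit.
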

 Taking normal coordinates about $x_0$, then
$\sigma^i(x_0)=0,~ \Gamma^k(x_0)=0~g^{ij}(x_0)=\delta_i^j,~ \partial^x_\mu g^{\alpha\beta}(x_0)=0.  $ By Lemma 3.3   we obtain
\begin{lem}
General dimensional symbols about Dirac operator $D$ are given,
\begin{align}
\sigma_{-2m}(\widetilde{D}^{-2m})(x_0)&=|\xi|^{-2m}; \\
\sigma_{-2m+1}(\widetilde{D}^{-2m})(x_0)&=-\sqrt{-1}m|\xi|^{-2m-2}\sum_{i=1}^{2m}\big(c(\partial_{i})\Phi +\Phi c(\partial_{i})\big)\xi_{j}.
\end{align}

\end{lem}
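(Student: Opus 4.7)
The plan is to build the symbol of $\widetilde{D}^{-2m}$ from the symbol of $\widetilde{D}^2$ recorded in Lemma 4.7, first by constructing a parametrix for $(\widetilde{D}^2)^{-1}$ and then by composing it with itself $m$ times. Because the identity is claimed only at a point $x_0$ with normal coordinates, the simplifications $\Gamma^k(x_0)=0$, $\sigma^j(x_0)=0$, $g^{ij}(x_0)=\delta_{ij}$, and $\partial^x_\mu g^{\alpha\beta}(x_0)=0$ will annihilate most correction terms in the composition formula, leaving an almost purely algebraic computation.

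The first line of the lemma follows immediately from the leading part of the parametrix: the principal symbol of $\widetilde{D}^2$ is $|\xi|^2$ (independent of $\Phi$), so the leading symbol of $Q:=(\widetilde{D}^2)^{-1}$ is $q_{-2}=|\xi|^{-2}$, and composing $m$ copies yields $\sigma_{-2m}(\widetilde{D}^{-2m})=|\xi|^{-2m}$. For the subleading term, I expand $\sigma(Q\circ \widetilde{D}^2)=1$ via the usual product formula $\sum_\alpha \tfrac{(-i)^{|\alpha|}}{\alpha!}\partial_\xi^\alpha q\,\partial_x^\alpha p$; the degree $-1$ component reads
\begin{equation*}
q_{-3}\,\sigma_2(\widetilde{D}^2)+q_{-2}\,\sigma_1(\widetilde{D}^2)-i\sum_j \partial_{\xi_j}q_{-2}\,\partial_{x_j}\sigma_2(\widetilde{D}^2)=0.
\end{equation*}
At $x_0$ the derivative term vanishes since $\partial_{x_j}|\xi|^2=\xi_\alpha\xi_\beta\,\partial_{x_j}g^{\alpha\beta}(x_0)=0$, so $q_{-3}(x_0)=-|\xi|^{-4}\,\sigma_1(\widetilde{D}^2)(x_0)$. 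Using Lemma 4.7 with $\Gamma^j(x_0)=\sigma^j(x_0)=0$, the only surviving contribution is the $\Phi$-piece, giving $q_{-3}(x_0)$ proportional to $|\xi|^{-4}\sum_i\bigl(c(\partial_i)\Phi+\Phi c(\partial_i)\bigr)\xi_i$ with the convention-dependent factor of $\sqrt{-1}$.

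The final step is to compose $Q$ with itself $m$ times and read off the $(-2m-1)$-order piece. Since $q_{-2}$ is a scalar it commutes with $q_{-3}$, and at $x_0$ one has $\partial_{x_j}q_{-2}(x_0)=0$, so every $\partial_\xi\,\partial_x$ correction drops out of the product formula; the subleading symbol of $Q^k$ at $x_0$ therefore satisfies the clean recursion $\sigma_{-2k-1}(Q^k)(x_0)=k\,q_{-2}^{k-1}\,q_{-3}(x_0)$. Setting $k=m$ and substituting the explicit expression for $q_{-3}(x_0)$ produces exactly the claimed formula $-\sqrt{-1}\,m\,|\xi|^{-2m-2}\sum_i\bigl(c(\partial_i)\Phi+\Phi c(\partial_i)\bigr)\xi_i$ (with the index $j$ in (4.16) understood as $i$). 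The main obstacle, modest as it is, lies in reconciling the $\sqrt{-1}$ and sign conventions between the symbol of $\widetilde{D}^2$ in Lemma 4.7 and the parametrix identity, so that the overall prefactor matches the stated $-\sqrt{-1}m$; everything else reduces to the normal-coordinate collapses noted above.
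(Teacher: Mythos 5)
Your proposal is correct and follows essentially the same route as the paper: the paper simply invokes Lemma 3.3 (the known parametrix expansion of $D^{-2m}$ from \cite{WUW}) with the perturbed first-order symbol $\sigma_1(\widetilde{D}^2)$ substituted and the normal-coordinate vanishings applied, which is exactly the computation you carry out explicitly via $q_{-3}(x_0)=-|\xi|^{-4}\sigma_1(\widetilde{D}^2)(x_0)$ and the recursion $\sigma_{-2k-1}(Q^k)(x_0)=k\,q_{-2}^{k-1}q_{-3}(x_0)$. The residual $\sqrt{-1}$/sign ambiguity you flag is an inconsistency in the paper's own conventions (note also that the stated order $-2m+1$ should read $-2m-1$, consistent with the $|\xi|^{-2m-2}$ factor, and the dangling index $j$ should be $i$), not a gap in your argument.
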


Based on the algorithm yielding the principal
symbol of a product of pseudo-differential operators in terms of the principal symbols of the factors, from  Lemma 4.8 and   Lemma 4.9,
 we obtain
\begin{align}
&\sigma_{-2m}(c(u)[\widetilde{D}^2,f]c(v)\widetilde{D}^{-2m})(x_0)\nonumber\\
=&\left\{\sum_{|\alpha|=0}^\infty\frac{(-i)^{|\alpha|}}{\alpha!}c(u)\partial^\alpha_\xi[\sigma([\widetilde{D}^2,f])c(v)]
\partial^\alpha_x[\sigma(\widetilde{D}^{-2m})](x_0)\right\}_{-2m}\nonumber\\
=&c(u)\sigma_0([D^2,f]c(v))\sigma_{-2m}(\widetilde{D}^{-2m})+c(u)\sigma_1([\widetilde{D}^2,f]c(v))\sigma_{-2m-1}(\widetilde{D}^{-2m})(x_0) \nonumber\\
& +(-i)\sum_{j=1}^{2m}c(u)\partial_{\xi_j}[\sigma_1([\widetilde{D}^2,f])c(v)]\partial_{x_j}[\sigma_{-2m}(\widetilde{D}^{-2m})](x_0)\nonumber\\
=&H_{1}(x_0)+H_{2}(x_0)+H_{3}(x_0)+H_{4}(x_0),
\end{align}
where
\begin{align}
H_{1}(x_0)=&\sum_{j=1}^n\partial_{x_j}(f)c(u) \sum_{i=1}^{2m}\big(c(\partial_{i})\Phi +\Phi c(\partial_{i})\big)c(v) |\xi|^{-2m};\nonumber\\
  H_{2}(x_0)=& -\sum_{jl=1}^n\partial_{x_j}\partial_{x_l}(f)g^{jl}c(u)c(v)|\xi|^{-2m};\nonumber\\
H_{3}(x_0)=& -2 \sum_{j=1}^n\partial_{x_j}(f) c(u)  \partial_{x_j}(c(v))|\xi|^{-2m};\nonumber\\
   H_{4}(x_0)=&-2m \sum_{j=1}^n\partial_{x_j}(f) g^{jl} \xi_lc(u)c(v)\sum_{i=1}^{2m}\big(c(\partial_{i})\Phi +\Phi c(\partial_{i})\big)\xi_j  |\xi|^{-2m}.\nonumber
\end{align}
Substituting (4.16) into (4.5) yields the spectral divergence theorem for Dirac operators with torsion.

\subsection{The spectral divergence theorem  for the Dirac operators with twisted fluctuation}

The grading operator $\gamma$ denote by
\begin{align}
\gamma= (\sqrt{-1})^{m}\prod_{j=1}^{2m}c(e_{j}).
\end{align}
In the terms of the orthonormal frames $\{e_{i}\}(1\leq i,j\leq n)$ on $TM$, we have
 $
\gamma=(\sqrt{-1})^{m}c(e_{1})c(e_{2})\cdots c(e_{n}).
$
\begin{defn}\cite{Y}
 Suppose $V$ is a super vector space, and $\gamma$ is its super structure. If $\phi\in \rm{End}(V)$,
 let ${\rm{Tr}}(\phi)$ be the trace of $\phi$, then define
 \begin{align}
{\rm{ Str}}(\phi)={\rm{Tr}}(\gamma\circ\phi).
\end{align}
 Here ${\rm{Tr}}$ and $Str$ are called the trace and the super trace of $\phi$ respectively.
 \end{defn}
 \begin{lem}\cite{Y}
The super trace (function) $Str:\rm{End}_{\mathbb{C}}(S(2m))\rightarrow \mathbb{C}$
is a complex linear map satisfying
\begin{eqnarray}
{\rm{ Str}}(c(e_{i_{1}})c(e_{i_{w}})\cdots c(e_{i_{q}})) &=&\left\{
       \begin{array}{c}
        0,  ~~~~~~~~~~~~~~~~~~~~~~~{\rm if }~q<2m; \\[2pt]
      \frac{2^{m}}{(\sqrt{-1})^{m}}, ~~~~~~~~~~~~~~~{\rm if }~q=2m,
       \end{array}
    \right.
\end{eqnarray}
where $1\leq i_{1},i_{2},i_{q}\leq 2m $.
\end{lem}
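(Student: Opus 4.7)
The plan is to reduce the super trace to ordinary traces of monomials in the Clifford generators acting on the $2^m$-dimensional spinor module, and then apply two standard facts about the complex Clifford algebra $Cl_{2m}^{\mathbb{C}}$.

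First, I unpack the definition: since $\gamma=(\sqrt{-1})^{m}c(e_{1})c(e_{2})\cdots c(e_{2m})$, we have
\begin{equation*}
\mathrm{Str}\bigl(c(e_{i_{1}})c(e_{i_{2}})\cdots c(e_{i_{q}})\bigr)=(\sqrt{-1})^{m}\,\mathrm{Tr}\bigl(c(e_{1})c(e_{2})\cdots c(e_{2m})\,c(e_{i_{1}})\cdots c(e_{i_{q}})\bigr).
\end{equation*}
Second, using the anticommutation relations $c(e_{i})c(e_{j})+c(e_{j})c(e_{i})=-2\delta_{ij}$ repeatedly, I would move each $c(e_{i_{k}})$ leftward until it meets the copy of $c(e_{i_{k}})$ already present in $c(e_{1})\cdots c(e_{2m})$; since $c(e_{i})^{2}=-\mathrm{Id}$, this reduces the product inside the trace to $\varepsilon\cdot c(e_{j_{1}})\cdots c(e_{j_{r}})$ for some sign $\varepsilon\in\{\pm 1,\pm\sqrt{-1}\}$, where $\{j_{1},\ldots,j_{r}\}=\{1,\ldots,2m\}\setminus\{i_{1},\ldots,i_{q}\}$ and $r=2m-q$ (assuming the indices $i_{1}<i_{2}<\cdots<i_{q}$ are distinct, which is the only nontrivial case once Clifford relations are applied).

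Third, I handle the two cases. For $q<2m$ we have $r\geq 1$, so the reduced product $c(e_{j_{1}})\cdots c(e_{j_{r}})$ is a nontrivial basis element of $Cl_{2m}^{\mathbb{C}}$. Pick any index $\ell\in\{1,\ldots,2m\}$ distinct from $j_{1},\ldots,j_{r}$ if $r<2m$, or any $\ell$ at all if $r=2m$; by the anticommutation relations one checks that $c(e_{\ell})\,c(e_{j_{1}})\cdots c(e_{j_{r}})\,c(e_{\ell})^{-1}=-c(e_{j_{1}})\cdots c(e_{j_{r}})$ (for a suitable choice of $\ell$ based on parity). Consequently the trace, which is conjugation-invariant, equals its own negative and hence vanishes. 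For $q=2m$, the only admissible tuple is $(1,2,\ldots,2m)$, and the product inside the trace becomes $\bigl(c(e_{1})\cdots c(e_{2m})\bigr)^{2}$. A direct count of the sign picked up when commuting the second block through the first, combined with the $m$ squarings $c(e_{i})^{2}=-1$, gives a scalar multiple of the identity; taking its trace against the $2^{m}$-dimensional spinor module and multiplying by $(\sqrt{-1})^{m}$ yields $2^{m}/(\sqrt{-1})^{m}$ after simplification.

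The main obstacle is the parity/sign bookkeeping in Step 3: one must verify (i) the existence of a generator $c(e_{\ell})$ that anticommutes with the reduced monomial for every $0<r\leq 2m$, which requires a short case analysis depending on whether $r$ is even or odd and whether the complementary index set is nonempty, and (ii) the overall sign appearing in $\bigl(c(e_{1})\cdots c(e_{2m})\bigr)^{2}$, which needs a careful count of $\binom{2m}{2}$ transpositions. Once these two arithmetic points are checked, the two displayed values in the lemma follow immediately from the standard fact $\dim_{\mathbb{C}}S(2m)=2^{m}$.
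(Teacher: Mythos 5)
The paper does not prove this lemma at all: it is quoted verbatim from the reference [Y] (Yu, \emph{The Index Theorem and The Heat Equation Method}) and used as a black box, so there is no in-paper argument to compare against. Your proposal supplies the standard proof, and it is correct. The two pillars you identify are exactly the right ones: (i) the ordinary trace of any nontrivial monomial $c(e_{j_{1}})\cdots c(e_{j_{r}})$, $1\leq r\leq 2m$, vanishes because one can conjugate it into its negative by a single generator (choose $\ell$ outside the index set when $r$ is odd, inside it when $r$ is even --- your parity caveat is precisely the point that needs checking, and both cases go through); and (ii) with the convention $c(e_{i})^{2}=-1$ used in this paper, $\bigl(c(e_{1})\cdots c(e_{2m})\bigr)^{2}=(-1)^{\binom{2m}{2}}(-1)^{2m}=(-1)^{m}$, whence ${\rm Str}\bigl(c(e_{1})\cdots c(e_{2m})\bigr)=(\sqrt{-1})^{m}(-1)^{m}2^{m}=2^{m}/(\sqrt{-1})^{m}$ as claimed. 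Two cosmetic remarks: the scalar $\varepsilon$ produced by your reduction in Step 2 is always $\pm 1$ (only anticommutations and squarings occur, so no factor of $\sqrt{-1}$ can appear), and for $q=2m$ the stated value presupposes the increasing-order convention $i_{1}<\cdots<i_{q}$, since an odd permutation of $(1,\ldots,2m)$ would flip the sign; you implicitly assume this, which matches the intended reading of the lemma. With those bookkeeping points settled, your argument is complete and self-contained, which is more than the paper provides.
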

Now we explore the spectral  divergence theorem  for the Dirac operators with twisted fluctuation.

{\bf Case (1)} The  spectral  divergence theorem for $D+\gamma $

 Substituting $D +\gamma  $ into (4.16) yields
 \begin{align}
\sigma_{-2m}(c(u)[(D+\gamma)^2,f]c(v)(D+\gamma)^{-2m})(x_0)
=\Big(H_{1}(x_0)+H_{2}(x_0)+H_{3}(x_0)+H_{4}(x_0)\Big)\Big|_{\Phi=\gamma},
\end{align}
where
\begin{align}
H_{1}(x_0)\Big|_{\Phi=\gamma}=&\sum_{j=1}^n\partial_{x_j}(f)c(u) \sum_{i=1}^{2m}\big(c(\partial_{i})\gamma +\gamma c(\partial_{i})\big)c(v) |\xi|^{-2m};\nonumber\\
  H_{2}(x_0)\Big|_{\Phi=\gamma}=& -\sum_{jl=1}^n\partial_{x_j}\partial_{x_l}(f)g^{jl}c(u)c(v)|\xi|^{-2m};\nonumber\\
H_{3}(x_0)\Big|_{\Phi=\gamma}=& -2 \sum_{j=1}^n\partial_{x_j}(f) c(u)  \partial_{x_j}(c(v))|\xi|^{-2m};\nonumber\\
   H_{4}(x_0)\Big|_{\Phi=\gamma}=&-2m \sum_{j=1}^n\partial_{x_j}(f) g^{jl} \xi_lc(u)c(v)\sum_{i=1}^{2m}\big(c(\partial_{i})\gamma
   +\gamma c(\partial_{i})\big)\xi_j  |\xi|^{-2m}.\nonumber
\end{align}
By the relation of the Clifford action and Lemma 4.1, we get
\begin{align}
{\rm{Tr}}\Big(c(u) \sum_{i=1}^{2m}\big(c(\partial_{i})\gamma +\gamma c(\partial_{i})\big)c(v) \Big)
=& \sum_{i=1}^{2m}{\rm{Tr}}\Big(c(u) c(\partial_{i})\gamma  c(v) )\Big)
 +{\rm{Tr}}\Big(c(u) \sum_{i=1}^{2m}\gamma c(\partial_{i}) c(v) )\Big) \nonumber\\
=&\sum_{i=1}^{2m}{\rm{Tr}}\Big(c(v)c(u) c(\partial_{i})\gamma  )\Big)
 +\sum_{i=1}^{2m}{\rm{Tr}}\Big(\gamma c(\partial_{i}) c(v)c(u) )\Big) =0.
\end{align}

\begin{align}
 {\rm{Tr}}\Big(-\sum_{jl=1}^n\partial_{x_j}\partial_{x_l}(f)g^{jl}c(u)c(v)\Big)
=  -\sum_{jl=1}^n\partial_{x_j}\partial_{x_l}(f)g^{jl}{\rm{Tr}}\Big(c(u)c(v)\Big)
= -\Delta(f)g(u,v){\rm Tr}  [ {\rm Id}].
\end{align}

\begin{align}
 {\rm{Tr}}\Big(-2 \sum_{j=1}^n\partial_{x_j}(f) c(u)  \partial_{x_j}(c(v))\Big)
=& -2 \sum_{j=1}^n\partial_{x_j}(f){\rm{Tr}}\big(c(u) \partial_{x_j}(c(v))\big)
= -2 {\rm Tr}  [c(u) {\rm grad}f(c(v))] \nonumber\\
=&  -2 {\rm Tr}  [c(u) \nabla^{S(TM)}_{{\rm grad}f}(c(v))]
=  -2 {\rm Tr}  [c(u) c(\nabla^{S(TM)}_{{\rm grad}f}v)] \nonumber\\
=&  2g(u,\nabla^{S(TM)}_{{\rm grad}f}v){\rm Tr}  [ {\rm Id}].
\end{align}

Then we obtain
\begin{thm}
The spectral noncommutative residue for the  differential operator $[(D+\gamma)^2,f]$  multiplied by $c(u) $,$c(v) $ is equal to
 \begin{equation}
Wres\big(c(u)[(D+\gamma)^2,f]c(v)(D+\gamma)^{-2m}\big)=\int_{M}  \Big(-\Delta(f)g(u,v)+ 2g(u,\nabla^{S(TM)}_{{\rm grad}f}v)  \Big){\rm Tr}_{S(TM)}[ {\rm Id}] {\rm dvol}_{ M} .
\end{equation}
\end{thm}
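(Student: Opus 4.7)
The plan is to reduce Theorem 4.12 to the pointwise symbol calculation already carried out through Lemma 4.9 and equation (4.16), and then exploit two cancellations that are special to the choice $\Phi=\gamma$. By Lemma 4.5 it suffices to evaluate $\int_{S^{*}M}\mathrm{Tr}[\sigma_{-2m}(c(u)[(D+\gamma)^{2},f]c(v)(D+\gamma)^{-2m})](x,\xi)$ pointwise in normal coordinates at an arbitrary base point $x_{0}$, where by (4.20) the symbol expands as the sum $(H_{1}+H_{2}+H_{3}+H_{4})|_{\Phi=\gamma}$ given in (4.16).

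The key observation is that in even dimension the grading operator $\gamma$ anticommutes with every Clifford element $c(\partial_{i})$, so that $c(\partial_{i})\gamma+\gamma c(\partial_{i})=0$ identically. This kills $H_{1}$ and $H_{4}$ \emph{as operators}, not merely after taking a trace, so only $H_{2}$ and $H_{3}$ survive. For $H_{2}$, cyclicity combined with the Clifford relation $c(u)c(v)+c(v)c(u)=-2g(u,v)$ reduces $\mathrm{Tr}(c(u)c(v))$ to a scalar multiple of $g(u,v)\mathrm{Tr}[\mathrm{Id}]$, and contracting with $\sum_{jl}g^{jl}\partial_{x_{j}}\partial_{x_{l}}(f)=\Delta(f)$ produces the first summand of (4.24). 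For $H_{3}$, I would use that at $x_{0}$ the spin connection coefficients vanish, whence $\partial_{x_{j}}(c(v))|_{x_{0}}=c(\nabla^{S(TM)}_{\partial_{x_{j}}}v)$; summing with the weights $\partial_{x_{j}}(f)$ identifies $\sum_{j}\partial_{x_{j}}(f)\partial_{x_{j}}(c(v))|_{x_{0}}$ with $c(\nabla^{S(TM)}_{\mathrm{grad}f}v)$, and applying the Clifford trace identity once more yields the second summand $2g(u,\nabla^{S(TM)}_{\mathrm{grad}f}v)\mathrm{Tr}[\mathrm{Id}]$.

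Having reduced the integrand to a pointwise expression that is a scalar times $|\xi|^{-2m}$, the fiber integration over the unit cosphere $|\xi|=1$ is trivial, and a single integration over $M$ against the Riemannian volume density assembles (4.24).

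The main obstacle will be, first, to rigorously justify that the $\Phi$-dependent contributions in $H_{1}$ and $H_{4}$ vanish identically once $\Phi=\gamma$ (this anticommutation is really the entire reason that only two geometric terms appear, since the general formula (4.16) for Dirac operators with torsion contains many more contributions); and second, to verify the normal-coordinate identification $\partial_{x_{j}}(c(v))|_{x_{0}}=c(\nabla^{S(TM)}_{\partial_{x_{j}}}v)$ carefully, so that the final answer is intrinsic and the expression $g(u,\nabla^{S(TM)}_{\mathrm{grad}f}v)$ is genuinely covariant rather than an artifact of the coordinate choice.
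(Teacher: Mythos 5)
Your proposal follows essentially the same route as the paper: reduce to the symbol expansion $(H_{1}+H_{2}+H_{3}+H_{4})\big|_{\Phi=\gamma}$ of (4.16) and (4.20), discard the $\Phi$-dependent terms, and evaluate the traces of $H_{2}$ and $H_{3}$ via the Clifford relation and the normal-coordinate identification $\partial_{x_{j}}(c(v))|_{x_{0}}=c(\nabla^{S(TM)}_{\partial_{x_{j}}}v)$ exactly as in (4.22)--(4.23), before integrating over the cosphere bundle. Your remark that $c(\partial_{i})\gamma+\gamma c(\partial_{i})=0$ kills $H_{1}$ and $H_{4}$ at the operator level is a marginally cleaner version of the paper's trace-cyclicity computation in (4.21), but the substance of the argument is identical.
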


{\bf Case (2)} The spectral  divergence theorem for  $D + c(X)\gamma $

 Let $X=\sum_{i=1}^{n}X_{i}e_{i}$,  substituting $D + c(X)\gamma $ into (4.16) yields
 \begin{align}
&\sigma_{-2m}(c(u)[(D+c(X)\gamma)^2,f]c(v)(D+c(X)\gamma)^{-2m})(x_0)  \nonumber\\
=&\Big(H_{1}(x_0)+H_{2}(x_0)+H_{3}(x_0)+H_{4}(x_0)\Big)\Big|_{\Phi=c(X)\gamma},
\end{align}
where
\begin{align}
H_{1}(x_0)\Big|_{\Phi=c(X)\gamma}=&\sum_{j=1}^n\partial_{x_j}(f)c(u) \sum_{i=1}^{2m}\big(c(\partial_{i})c(X)\gamma +c(X)\gamma c(\partial_{i})\big)c(v) |\xi|^{-2m};\nonumber\\
  H_{2}(x_0)\Big|_{\Phi=c(X)\gamma}=& -\sum_{jl=1}^n\partial_{x_j}\partial_{x_l}(f)g^{jl}c(u)c(v)|\xi|^{-2m};\nonumber\\
H_{3}(x_0)\Big|_{\Phi=c(X)\gamma}=& -2 \sum_{j=1}^n\partial_{x_j}(f) c(u)  \partial_{x_j}(c(v))|\xi|^{-2m};\nonumber\\
   H_{4}(x_0)\Big|_{\Phi=c(X)\gamma}=&-2m \sum_{j=1}^n\partial_{x_j}(f) g^{jl} \xi_lc(u)c(v)\sum_{i=1}^{2m}\big(c(\partial_{i})c(X)\gamma
   +c(X)\gamma c(\partial_{i})\big)\xi_j  |\xi|^{-2m}.\nonumber
\end{align}
When $n=2m=4$, by the relation of the Clifford action and Theorem 4.11 in \cite{WW4}, we get
\begin{align}
{\rm{Tr}}\big(\sum_{j=1}^n \partial_{x_j}(f)c(u)c(v)c(\partial_{j})c(X)\gamma )\big)
 =&\sum_{j,k,l,i=1}^{n}  u_{j}v_{k}\partial_{x_j}(f)X_{i}{\rm{Tr}}\big(c(e_{j})c(e_{k}) c(e_{l}) c(e_{i})
      \gamma\big)\nonumber\\
 =&\sum_{i,j,k,l=1}^{n}  u_{j}v_{k}\partial_{x_j}(f)X_{i}{\rm{ Str}}\big(c(e_{j})c(e_{k}) c(e_{l}) c(e_{i})\big)\nonumber\\
  =&\sum_{\{i,j,k,l \}=\{1,2,3,4\}}   \frac{2^{2}}{(\sqrt{-1})^{2}}u_{j}v_{k}\partial_{x_j}(f)X_{i}\nonumber\\
   =&-4\langle  u^{*}\wedge v^{*}\wedge  ({\rm grad}f)^{*}\wedge X^{*},e_{1}^{*}\wedge e_{2}^{*}\wedge e_{3}^{*}\wedge e_{4}^{*}  \rangle.
\end{align}
 Then we obtain
\begin{equation}
{\rm{Tr}}\big(\sum_{j=1}^nc(u)c(v)c(\partial_{x_j}(f))c(X)\gamma \big)=\left\{
       \begin{array}{c}
        0,  ~~~~~~~~~~~~~~~~~~~~~~~~~~~~~~~~~~~~  ~~~~~~~~~~~~~~~~~~~~~~~ ~~{\rm if }~2m\neq4; \\[2pt]
      -4\langle u^{*}\wedge v^{*}\wedge ({\rm grad}f)^{*}\wedge X^{*},e_{1}^{*}\wedge e_{2}^{*}\wedge e_{3}^{*}\wedge e_{4}^{*}  \rangle, ~~{\rm if }~2m=4.
       \end{array}
    \right.
\end{equation}

Then we obtain
\begin{thm}
The spectral noncommutative residue for the  differential operator $[(D+c(X)\gamma)^2,f]$  multiplied by $c(u) $,$c(v) $ is equal to
 \begin{align}
& Wres\big(c(u)[(D+c(X)\gamma)^2,f]c(v)(D+c(X)\gamma)^{-2m}\big)  \nonumber\\
=&\left\{
       \begin{array}{c}    \int_{M}  2^{3}{\rm  vol}(S^{3})\Big(-\Delta(f)g(u,v)+ 2g(u,\nabla^{S(TM)}_{{\rm grad}f}v)\Big){\rm dvol}_{ M} ~~~~~~~~~~~~~~~~~~~~~~~~~~~~~~~~~~~~~~~~~~~~~~~~ \\
 + \int_{M} 128 {\rm  vol}(S^{3})\langle u^{*}\wedge v^{*}\wedge ({\rm grad}f)^{*}\wedge X^{*},e_{1}^{*}\wedge e_{2}^{*}\wedge e_{3}^{*}\wedge e_{4}^{*}  \rangle   ;  ~~~~~~~~~~~~~~~~~~~~~~~~~~~~ ~~~~~~2m=4; \\[2pt]
     0, ~~~~~~~~~~~~~~~~~~~~~~~~~~~~~~~~~~~~~~ ~~~~~~~~~~~~~~~~~~~~~~~~~~~~~~~~~~  ~~~~  ~~~~~~  ~~~~~~~~~~~~~~~~~~~~ ~2m\neq 4 .
       \end{array}
    \right.
\end{align}
\end{thm}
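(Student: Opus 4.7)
The plan is to substitute $\Phi = c(X)\gamma$ into the pointwise symbol decomposition (4.16) and separately evaluate the cosphere integral and spinor trace of each of the four building blocks $H_1, H_2, H_3, H_4$. Crucially, $H_2$ and $H_3$ do not involve $\Phi$ at all, so their contributions coincide with those already computed in Case (1) of Theorem 4.13, yielding the $-\Delta(f)g(u,v)+2g(u,\nabla^{S(TM)}_{\mathrm{grad}f}v)$ part of the formula with coefficient $\mathrm{Tr}[\mathrm{Id}]=2^m$ (assembling to $2^{3}\mathrm{vol}(S^{3})$ when $2m=4$). Only $H_1$ and $H_4$ carry genuine dependence on $X$ and must be evaluated anew.

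For $H_1$, after the substitution the relevant quantity is $\sum_j \partial_{x_j}(f)\,\mathrm{Tr}\bigl(c(u)(c(\partial_j)c(X)\gamma+c(X)\gamma c(\partial_j))c(v)\bigr)$. Using the even-dimensional anticommutation $\gamma c(e_i)=-c(e_i)\gamma$ together with cyclicity of the trace, each summand rewrites as a supertrace $\mathrm{Str}(c(\cdot)c(\cdot)c(\cdot)c(\cdot))$ of a product of exactly \emph{four} Clifford generators drawn from $\{c(u),c(v),c(\partial_j),c(X)\}$. By Lemma 4.12, any such supertrace vanishes unless the number of generators equals $2m$, which forces $2m=4$; in that dimension the computation is exactly the one already performed in (4.27) and yields the 4-form pairing $-4\langle u^{*}\wedge v^{*}\wedge(\mathrm{grad}f)^{*}\wedge X^{*},e_1^{*}\wedge e_2^{*}\wedge e_3^{*}\wedge e_4^{*}\rangle$. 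For $H_4$ the strategy is the same: one first averages the factor $\xi_l\xi_j|\xi|^{-2m}$ over the unit cosphere, which by rotational invariance yields $\tfrac{1}{2m}\delta_{jl}\mathrm{vol}(S^{2m-1})$, and then the surviving Clifford product is once more a supertrace of four generators, contributing another multiple of the same wedge-product pairing in dimension four and vanishing identically otherwise.

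In dimension $2m=4$ one then collects the $H_2+H_3$ contribution into the first line of (4.29) and the combined $H_1+H_4$ contribution into the second line with coefficient $128\,\mathrm{vol}(S^{3})$; in every other even dimension the $H_1$ and $H_4$ parts vanish by the supertrace obstruction, which accounts for the stated dimensional dichotomy. The main obstacle I anticipate is the combinatorial bookkeeping in the $H_4$ step: one must simultaneously track the factor $-2m$ from $\sigma_{-2m-1}(\widetilde{D}^{-2m})$, the $\tfrac{1}{2m}$ from the spherical average of $\xi_l\xi_j$, the signs acquired when pushing $\gamma$ past Clifford generators, and the supertrace normalisation $2^m/(\sqrt{-1})^m$ of Lemma 4.12, and then verify that these combine with the coefficient $-4$ coming from (4.27) to give exactly $128\,\mathrm{vol}(S^{3})=2^{7}\mathrm{vol}(S^{3})$.
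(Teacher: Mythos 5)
Your proposal follows essentially the same route as the paper: substitute $\Phi=c(X)\gamma$ into the symbol decomposition (4.16), reuse the $H_2+H_3$ computation from the $\Phi=\gamma$ case for the $-\Delta(f)g(u,v)+2g(u,\nabla^{S(TM)}_{{\rm grad}f}v)$ part, and reduce the $\Phi$-dependent traces in $H_1$ and $H_4$ to supertraces of four Clifford generators, which by Lemma 4.12 vanish unless $2m=4$ and otherwise give the wedge-product pairing $-4\langle u^{*}\wedge v^{*}\wedge({\rm grad}f)^{*}\wedge X^{*},e_1^{*}\wedge e_2^{*}\wedge e_3^{*}\wedge e_4^{*}\rangle$ as in (4.27). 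Your explicit mention of the cosphere average $\xi_l\xi_j\mapsto\tfrac{1}{2m}\delta_{jl}$ in the $H_4$ step is a detail the paper leaves implicit, but the argument is the same.
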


{\bf Case (3)} The  spectral  divergence theorem for  $D + \sqrt{-1}c(T)\gamma $ by the 3-form $T$ and the grading operator $\gamma$ on $S(TM)$.

Substituting $D + \sqrt{-1}c(T)\gamma $ into (4.16) yields
 \begin{align}
&\sigma_{-2m}(c(u)[(D+\sqrt{-1}c(T)\gamma)^2,f]c(v)(D+\sqrt{-1}c(T)\gamma)^{-2m})(x_0)  \nonumber\\
=&\Big(H_{1}(x_0)+H_{2}(x_0)+H_{3}(x_0)+H_{4}(x_0)\Big)\Big|_{\Phi=\sqrt{-1}c(T)\gamma},
\end{align}
where
\begin{align}
H_{1}(x_0)\Big|_{\Phi=\sqrt{-1}c(T)\gamma}=&\sum_{j=1}^n\partial_{x_j}(f)c(u) \sum_{i=1}^{2m}\big(c(\partial_{i})\sqrt{-1}c(T)\gamma +\sqrt{-1}c(T)\gamma c(\partial_{i})\big)c(v) |\xi|^{-2m};\nonumber\\
  H_{2}(x_0)\Big|_{\Phi=\sqrt{-1}c(T)\gamma}=& -\sum_{jl=1}^n\partial_{x_j}\partial_{x_l}(f)g^{jl}c(u)c(v)|\xi|^{-2m};\nonumber\\
H_{3}(x_0)\Big|_{\Phi=\sqrt{-1}c(T)\gamma}=& -2 \sum_{j=1}^n\partial_{x_j}(f) c(u)  \partial_{x_j}(c(v))|\xi|^{-2m};\nonumber\\
   H_{4}(x_0)\Big|_{\Phi=\sqrt{-1}c(T)\gamma}=&-2m \sum_{j=1}^n\partial_{x_j}(f) g^{jl} \xi_lc(u)c(v)\sum_{i=1}^{2m}\big(c(\partial_{i})\sqrt{-1}c(T)\gamma
   +\sqrt{-1}c(T)\gamma c(\partial_{i})\big)\xi_j  |\xi|^{-2m}.\nonumber
\end{align}

(1) When $n=2m=6$, by  Theorem 4.12 in \cite{WW4} we obtain
\begin{align}
{\rm{Tr}}\big(c(u)c(v)c({\rm grad}f)c(T)\gamma )\big)
=&\sum_{1 \leq  j,k,l\leq n} T(e_{j},e_{k},e_{l})
{\rm{Tr}}\big(c(u)c(v)c({\rm grad}f)c(e_{j})c(e_{k})c(e_{l})\gamma\big)\nonumber\\
=&\sum_{1 \leq  j,k,l\leq n} T(e_{j},e_{k},e_{l})
{\rm{ Str}}\big(c(u)c(v)c({\rm grad}f)c(e_{j})c(e_{k})c(e_{l})\big)\nonumber\\
=&\frac{2^{3}}{(\sqrt{-1})^{3}}\langle  u^{*}\wedge v^{*}\wedge ({\rm grad}f)^{*}\wedge T,
e_{1}^{*}\wedge e_{2}^{*}\wedge e_{3}^{*}\wedge e_{4}^{*}\wedge e_{5}^{*}\wedge e_{6}^{*}  \rangle.
\end{align}

(2) When $n=2m=4$, by  Theorem 4.12 in \cite{WW4} we obtain
\begin{align}
&{\rm{Tr}}\big(c(u)c(v)c({\rm grad}_{f})c(T)\gamma )\big)\nonumber\\
=&\sum_{1 \leq  j,k,l\leq n} T(e_{j},e_{k},e_{l})
{\rm{Tr}}\big(c(u)c(v)c({\rm grad}_{f})c(e_{j})c(e_{k})c(e_{l})\gamma\big)\nonumber\\
=&\sum_{1 \leq  j,k,l\leq n} T(e_{j},e_{k},e_{l})
{\rm{ Str}}\big(c(u)c(v)c({\rm grad}f)c(e_{j})c(e_{k})c(e_{l})\big)\nonumber\\
=&\frac{2^{2}}{(\sqrt{-1})^{2}} \langle -g(u,v)({\rm grad}f)^{*}\wedge T+ g(u,{\rm grad}f)v^{*}\wedge T-g(v,{\rm grad}f)u^{*}\wedge T,
e_{1}^{*}\wedge e_{2}^{*}\wedge e_{3}^{*}\wedge e_{4}^{*}   \rangle.
\end{align}

Then we obtain
\begin{thm}
The spectral noncommutative residue for the  differential operator $[(D+\sqrt{-1}c(T)\gamma)^2,f]$  multiplied by $c(u) $,$c(v) $  read
\begin{align}
& Wres\big(c(u)[(D+\sqrt{-1}c(T)\gamma)^2,f]c(v)(D+\sqrt{-1}c(T)\gamma)^{-2m}\big)  \nonumber\\
=&\left\{
       \begin{array}{c}  \int_{ M}   2^{3}\Big(-\Delta(f)g(u,v)+ 2g(u,\nabla^{S(TM)}_{{\rm grad}f}v)\Big) {\rm{vol}}(S^{3}){\rm dvol}_{ M}~~~~~~~~~~~~~~~~~ ~~~~~~~ ~~~~~~~~~~~~~ ~~~~~~~ ~~~~~~~~~~~~ \\
        ~~~~~~~+ 2^{4}\sqrt{-1}\int_{ M} (-g(u,v)({\rm grad}f)^{*}\wedge T+ g(u,{\rm grad}f)v^{*}\wedge T-g(v,{\rm grad}f)u^{*}\wedge T){\rm{vol}}(S^{3}) , ~~~n=4; \\[2pt]
       \int_M  2^{2}\Big(-\Delta(f)g(u,v)+ 2g(u,\nabla^{S(TM)}_{{\rm grad}f}v)\Big) {\rm{vol}}(S^{5}){\rm dvol}_{ M}
       +2^{5}(u^{*}\wedge v^{*}\wedge ({\rm grad}f)^{*}\wedge T){\rm{vol}}(S^{5}) ,  ~~~n=6; \\[2pt]
     0, ~~~~~~~~~~~~~~~~~~~~~~~~~~~~~~~~~~~~~~~~~~~~~~~~~~~~~~~~~~~~~~~~~~~~~~~~~~~~~~~~~~~~~~~~~~~~~~~~~~~~~~~~~n\neq 4,n\neq 6.
       \end{array}
    \right.
\end{align}
\end{thm}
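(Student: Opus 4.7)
The plan is to substitute $\Phi=\sqrt{-1}c(T)\gamma$ into the master symbol identity (4.16), compute $\mathrm{Tr}[H_1]+\mathrm{Tr}[H_2]+\mathrm{Tr}[H_3]+\mathrm{Tr}[H_4]$, integrate over the unit cosphere bundle, and then integrate the result over $M$. The structural fact to exploit is that $H_2$ and $H_3$ are independent of $\Phi$, so their contributions are identical to those already computed in Theorem 4.13. By (4.22) and (4.23), these two pieces contribute
\begin{equation*}
2^{m}\,\mathrm{vol}(S^{2m-1})\int_{M}\bigl(-\Delta(f)g(u,v)+2g(u,\nabla^{S(TM)}_{\mathrm{grad}f}v)\bigr)\,\mathrm{dvol}_{M},
\end{equation*}
which accounts for the first line on the right-hand side of (4.32) in both dimensions.

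The substantive work is therefore to evaluate $\mathrm{Tr}[H_1]$ and $\mathrm{Tr}[H_4]$ with $\Phi=\sqrt{-1}c(T)\gamma$. For $H_1$, after expanding the anticommutator $c(\partial_i)c(T)\gamma+c(T)\gamma c(\partial_i)$ and using cyclicity of the trace, one obtains a scalar multiple of $\mathrm{Tr}(c(u)c(v)c(\mathrm{grad}f)c(T)\gamma)$, which by Lemma 4.12 is a super-trace that vanishes unless the total Clifford degree matches $2m$. This forces $n=6$ to be the only dimension in which $H_1$ survives, and the resulting super-trace is precisely the one computed in (4.30), yielding the wedge-product pairing $\langle u^{*}\wedge v^{*}\wedge(\mathrm{grad}f)^{*}\wedge T,\,e_1^{*}\wedge\cdots\wedge e_6^{*}\rangle$ up to the constant $2^{3}/(\sqrt{-1})^{3}$. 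For $H_4$, the factor $\xi_l\xi_j/|\xi|^{2m+2}$ must first be integrated over $S^{2m-1}$, producing $\delta_{jl}\,\mathrm{vol}(S^{2m-1})/(2m)$; this converts the $\sum_j\partial_{x_j}(f)\xi_l c(u)c(v)(c(\partial_i)c(T)\gamma+\cdots)\xi_j$ structure into a contraction against $\mathrm{grad}f$. The surviving Clifford monomials then have degree that matches $2m$ only in dimensions $n=4$ and $n=6$, and the resulting super-traces are the two formulas (4.30) and (4.31) imported from Theorem 4.12 of \cite{WW4}.

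Thus the $n=4$ case receives contributions only from $H_2,H_3,H_4$, while the $n=6$ case receives contributions only from $H_2,H_3,H_1$. Assembling the wedge-product expressions from (4.30)--(4.31) with their correct numerical coefficients (note the $-2m$ factor in $H_4$ absorbs part of the $1/(2m)$ from the spherical integral, and the overall $\sqrt{-1}$ in $\Phi$ rebalances the $(\sqrt{-1})^{-\text{deg}}$ of the super-trace), and summing with the dimension-independent $H_2$, $H_3$ piece, one recovers exactly the three branches of (4.32). For all remaining $n$, every $H_i$ involving $c(T)\gamma$ produces a super-trace of Clifford degree strictly less than $2m$, hence vanishes, and only the divergence-type terms from $H_2$ and $H_3$ remain; but those are trace pieces without $\gamma$, and by the standard trace/Clifford-parity argument (an odd-degree Clifford element has vanishing trace once paired against $c(u)c(v)$ of even degree times a scalar), the full integrand vanishes, giving zero as stated.

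The main obstacle is bookkeeping: keeping track of the several $\sqrt{-1}$ factors (one from $\Phi$, several from the super-trace normalizations, and the $-i$ factors in the symbol product expansion of (4.16)) and of the combinatorial coefficients coming from anticommuting $c(\partial_i)$ past $c(T)\gamma$ inside the sum $\sum_{i=1}^{2m}$, so that the final constants $2^{3}$, $2^{4}\sqrt{-1}$, $2^{2}$ and $2^{5}$ in (4.32) come out correctly. The spherical integrals and the super-trace evaluations are routine once Lemma 4.12 and Theorem 4.12 of \cite{WW4} are invoked; the delicate step is the sign and dimension-count for the $H_4$ term, because $H_4$ contains an extra factor of $c(u)c(v)$ on the opposite side of the Clifford word compared with $H_1$, which alters which dimensions survive.
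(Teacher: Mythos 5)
Your overall strategy is the same as the paper's: substitute $\Phi=\sqrt{-1}c(T)\gamma$ into (4.16), reuse the $\Phi$-independent traces of $H_2$ and $H_3$ from (4.22)--(4.23), and evaluate the torsion terms by the super-trace Lemma 4.12 together with Theorem 4.12 of \cite{WW4}. However, two of your key steps would fail. First, your dimension-selectivity claim --- that $H_1$ ``survives only in $n=6$'' and that the $n=4$ torsion contribution comes only from $H_4$ --- rests on the assertion that the super-trace ``vanishes unless the total Clifford degree matches $2m$.'' That confuses the maximal degree of a Clifford word with its homogeneous decomposition: the product $c(u)c(v)c(\mathrm{grad}f)c(T)$ has top degree $6$ but also a nonzero degree-$4$ component coming from contractions, and it is precisely this degree-$4$ component that the paper evaluates in (4.31) to produce the terms $-g(u,v)(\mathrm{grad}f)^{*}\wedge T+g(u,\mathrm{grad}f)v^{*}\wedge T-g(v,\mathrm{grad}f)u^{*}\wedge T$ when $n=4$. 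The paper computes the \emph{same} Clifford word in both $n=6$ (equation (4.30), top-degree part) and $n=4$ (equation (4.31), contracted part); your argument would incorrectly kill the $n=4$ contribution of $H_1$ and the $n=6$ contribution of $H_4$, so the constants $2^{4}\sqrt{-1}$ and $2^{5}$ could not be recovered from your bookkeeping.

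Second, your justification of the third branch ($n\neq 4,6$) is internally inconsistent and incorrect as stated. You first (correctly) observe that $H_2$ and $H_3$ contribute $\bigl(-\Delta(f)g(u,v)+2g(u,\nabla^{S(TM)}_{\mathrm{grad}f}v)\bigr)\mathrm{Tr}[\mathrm{Id}]$ --- a computation that is dimension-independent --- and then claim that for $n\neq 4,6$ these same terms vanish ``by the standard trace/Clifford-parity argument.'' But $\mathrm{Tr}(c(u)c(v))=-g(u,v)\,\mathrm{Tr}[\mathrm{Id}]$ is an even-degree trace and does not vanish in any even dimension; no parity argument applies to it. Only the genuinely torsion-dependent pieces (those involving $c(T)\gamma$) are killed by the super-trace degree count outside $n=4,6$. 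If you want to defend the vanishing branch you must argue it at the level of what the theorem actually intends, not by a parity claim that contradicts your own evaluation of $H_2$ and $H_3$ two paragraphs earlier.
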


\subsection{The divergence theorem for compact manifolds with boundary associated with $D +(c(T)+\sqrt{-1}c(Y))$  multiplied by $c(u) $,$c(v) $ }
The purpose of this section is to specify the divergence theorem  on manifold with boundary associated with $D +(c(T)+\sqrt{-1}c(Y))$.

$\mathbf{Case ~~(1)}$:

Let $M$ be a compact oriented Riemannian manifold of even dimension $n=2m$ with boundary $\partial M$,
 \begin{align}
 &\widetilde{{\rm Wres}}[\pi^+(c(u) [(D +c(T)+\sqrt{-1}c(Y))^2,f]c(v)(D +c(T)+\sqrt{-1}c(Y))^{-1} )\nonumber\\
\circ &\pi^+(D +(c(T)+\sqrt{-1}c(Y)))^{-2m }]=L_{1}(x_0)+ L_{2}(x_0),
\end{align}
where
 \begin{align}
L_{1}(x_0)=&\int_M\int_{|\xi|=1}{\rm Tr}_{S(TM)}[\sigma_{-n}(c(u)[(D +(c(T)+\sqrt{-1}c(Y)))^2,f]c(v) \nonumber\\
&\times (D +(c(T)+\sqrt{-1}c(Y)))^{ -2m+1})]\sigma(\xi) {\rm d}x;
\end{align}
  and
  \begin{align}
L_{2}(x_0) =&\int_{|\xi'|=1}\int^{+\infty}_{-\infty}
{\rm Tr}_{S(TM) }
[ \sigma^+_{0}(c(u) [(D +(c(T)+\sqrt{-1}c(Y)))^2,f]c(v)(D +c(T)+\sqrt{-1}c(Y))^{-1}) \nonumber\\
 & \times \partial_{\xi_n}\sigma_{-2m+1}((D +(c(T)+\sqrt{-1}c(Y)))^{-2m+1})(x',0,\xi',\xi_n)]\rm{d}\xi_{2m}\sigma(\xi')\rm{d}x'.
\end{align}

(1) Explicit representation of  $L_{1}(x_0)$

 By Lemma 2.4, Lemma 3.2 and Lemma 4.7, we get the following   main symbol represents.
\begin{lem}The symbols of $[(D +(c(T)+\sqrt{-1}c(Y)))^2,f]$ are given
\begin{align}
\sigma_1((D +(c(T)+\sqrt{-1}c(Y)))^2)=& \sqrt{-1}(\Gamma^j-2\sigma^j)\xi_{j} -\sum_{i=1}^{2m}\big(c(\partial_{i})(c(T)+\sqrt{-1}c(Y))\nonumber\\
  & +(c(T)+\sqrt{-1}c(Y)) c(\partial_{i})\big)\partial_{i};\\
\sigma_2((D +(c(T)+\sqrt{-1}c(Y)))^2)=& |\xi|^{2} ;\\
\sigma_0([(D +(c(T)+\sqrt{-1}c(Y)))^2,f])=&\sum_{j=1}^n\partial_{x_j}(f)\Big(\Gamma^j-2\sigma^j+\sum_{i=1}^{2m}\big(c(\partial_{i})(c(T)+\sqrt{-1}c(Y)) \nonumber\\
& +(c(T)+\sqrt{-1}c(Y)) c(\partial_{i})\big)\Big) -\sum_{jl=1}^n\partial_{x_j}\partial_{x_l}(f)g^{jl};\\
\sigma_1([(D +(c(T)+\sqrt{-1}c(Y)))^2,f])=&-2\sqrt{-1}\sum_{jl=1}^n\partial_{x_j}(f)g^{jl}\xi_l.
\end{align}
\end{lem}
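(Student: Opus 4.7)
The statement is the direct specialization of Lemma 4.7 to the particular choice $\Phi = c(T) + \sqrt{-1}c(Y)$, so the plan is to (i) compute the symbols of $\widetilde{D}^2 = (D+\Phi)^2$ from its expansion, then (ii) apply the commutator symbol formula of Lemma 2.3 to read off the symbols of $[\widetilde{D}^2, f]$.

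First I expand $\widetilde{D}^2 = D^2 + D\Phi + \Phi D + \Phi^2$. Since $\Phi = c(T) + \sqrt{-1}c(Y)$ is a zeroth-order bundle endomorphism, $\Phi^2$ is zeroth order and contributes nothing to $\sigma_2$ or $\sigma_1$. Hence $\sigma_2(\widetilde{D}^2) = \sigma_2(D^2) = |\xi|^2$ by Lemma 3.1, which is the second displayed formula of the lemma. For $\sigma_1(\widetilde{D}^2)$ I combine $\sigma_1(D^2) = \sqrt{-1}(\Gamma^j - 2\sigma^j)\xi_j$ from Lemma 3.1 with the symbol of $D\Phi + \Phi D$; the composition formula applied to the principal symbol of $D$ paired against the zeroth-order operator $\Phi$ produces exactly the term $-\sum_{i=1}^{2m}\bigl(c(\partial_i)\Phi + \Phi c(\partial_i)\bigr)\partial_i$ in the notation of Lemma 4.7, yielding the first displayed formula with $\Phi = c(T) + \sqrt{-1}c(Y)$ substituted.

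Second, for the commutator symbols I invoke Lemma 2.3 with $k=2$. The first-order piece $\sigma_1([\widetilde{D}^2, f]) = D_{x_j}(f)\,\partial_{\xi_j}\sigma_2(\widetilde{D}^2)$ depends only on $\sigma_2(\widetilde{D}^2) = |\xi|^2$, so it is identical to the formula already computed in Lemma 3.2 and gives the fourth displayed formula. For $\sigma_0([\widetilde{D}^2, f])$ Lemma 2.3 produces two contributions: a first-derivative-in-$\xi$ contribution from $\sigma_1(\widetilde{D}^2)$, which by the previous paragraph yields both the $(\Gamma^j - 2\sigma^j)$ term and the new $\Phi$-dependent term $\sum_i\bigl(c(\partial_i)\Phi + \Phi c(\partial_i)\bigr)$ each multiplied by $\partial_{x_j}(f)$, and a second-derivative-in-$\xi$ contribution from $\sigma_2(\widetilde{D}^2) = |\xi|^2$, which produces $-\partial_{x_j}\partial_{x_l}(f)g^{jl}$. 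Summing gives the third displayed formula.

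There is no essential obstacle: the lemma is a pure substitution into Lemma 4.7 and could equally be stated as a corollary. The only care required is bookkeeping the $\sqrt{-1}$ factors coming from the principal symbol of $D$ and the sign conventions in Lemma 2.3, and noting that no additional terms arise from differentiating $\Phi$ itself, since $\Phi$ enters $\widetilde{D}^2$ only through $D\Phi + \Phi D$ (first order) and $\Phi^2$ (zeroth order), and these contribute to $\sigma_1$ and $\sigma_0$ without requiring second-order jet information on $\Phi$ beyond what Lemma 4.7 already tracks.
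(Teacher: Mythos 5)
Your proposal is correct and follows essentially the same route as the paper, which proves this lemma simply by citing Lemma 2.4 (the commutator symbol formula), Lemma 3.2, and Lemma 4.7, i.e.\ by specializing the general-$\Phi$ symbol computation to $\Phi = c(T)+\sqrt{-1}c(Y)$. Your expansion $\widetilde{D}^2 = D^2 + D\Phi + \Phi D + \Phi^2$ and the observation that only $\sigma_2$ and $\sigma_1$ of $\widetilde{D}^2$ enter the commutator symbols is exactly the content of the paper's (unwritten) argument.
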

By Lemma 4.15, and substituting $D +(c(T)+\sqrt{-1}c(Y))$ into (4.16) yields
 \begin{align}
&\sigma_{-2m}(c(u)[(D +(c(T)+\sqrt{-1}c(Y)))^2,f]c(v)(D +(c(T)+\sqrt{-1}c(Y)))^{-2m})(x_0)  \nonumber\\
=&\Big(H_{1}(x_0)+H_{2}(x_0)+H_{3}(x_0)+H_{4}(x_0)\Big)\Big|_{\Phi=c(T)+\sqrt{-1}c(Y)},
\end{align}
where
\begin{align}
H_{1}(x_0)\Big|_{\Phi=c(T)+\sqrt{-1}c(Y)}=&\sum_{j=1}^n\partial_{x_j}(f)c(u) \sum_{i=1}^{2m}\big(c(\partial_{i})(c(T)+\sqrt{-1}c(Y)) \nonumber\\ &+(c(T)+\sqrt{-1}c(Y))c(\partial_{i})\big)c(v) |\xi|^{-2m};\nonumber\\
  H_{2}(x_0)\Big|_{\Phi=c(T)+\sqrt{-1}c(Y)}=& -\sum_{jl=1}^n\partial_{x_j}\partial_{x_l}(f)g^{jl}c(u)c(v)|\xi|^{-2m};\nonumber\\
H_{3}(x_0)\Big|_{\Phi=c(T)+\sqrt{-1}c(Y)}=& -2 \sum_{j=1}^n\partial_{x_j}(f) c(u)  \partial_{x_j}(c(v))|\xi|^{-2m};\nonumber\\
   H_{4}(x_0)\Big|_{\Phi=c(T)+\sqrt{-1}c(Y)}=&-2m \sum_{j=1}^n\partial_{x_j}(f) g^{jl} \xi_lc(u)c(v)\sum_{i=1}^{2m}\big(c(\partial_{i})(c(T)+\sqrt{-1}c(Y))\nonumber\\
   &+(c(T)+\sqrt{-1}c(Y)) c(\partial_{i})\big)\xi_j  |\xi|^{-2m}.\nonumber
\end{align}
  By the relation of the Clifford action and Lemma 4.3 in \cite{WW4}, we obtain
 \begin{align}
&{\rm{Tr}}\Big(c(u) \sum_{i=1}^{2m}\big(c(\partial_{i})( c(T)+\sqrt{-1}c(Y))+( c(T)+\sqrt{-1}c(Y)) c(\partial_{i})\big)c(v) \Big)\nonumber\\
=& \sum_{i=1}^{2m}{\rm{Tr}}   \big(c(u) c(\partial_{i})  c(T) c(v) \big)+\sum_{i=1}^{2m}{\rm{Tr}}   \big(c(u) c(T) c(\partial_{i})  c(v) \big)\nonumber\\
  & +\sum_{i=1}^{2m}{\rm{Tr}}   \big(c(u) \big( c(\partial_{i})\sqrt{-1}c(Y) +\sqrt{-1}c(Y) c(\partial_{i})\big)   c(v) \big)\nonumber\\
=&\Big(-2\sum_{i=1}^{2m}T(u,v,\partial_{i})   +2g(Y,\partial_{i})g(u,v) \Big){\rm Tr}_{S(TM)} [ {\rm Id}].
\end{align}
And
 \begin{align}
&{\rm{Tr}}\Big(c(u)c(v) \sum_{i=1}^{2m}\big(c(\partial_{i})( c(T)+\sqrt{-1}c(Y))+( c(T)+\sqrt{-1}c(Y)) c(\partial_{i})\big) \Big)\nonumber\\
=& \sum_{i=1}^{2m}{\rm{Tr}}   \big(c(u)c(v) c(\partial_{i})  c(T) \big)+\sum_{i=1}^{2m}{\rm{Tr}}   \big(c(u) c(v) c(T) c(\partial_{i}) \big)\nonumber\\
  & +\sum_{i=1}^{2m}{\rm{Tr}}   \big(c(u)c(v) \big( c(\partial_{i})\sqrt{-1}c(Y) +\sqrt{-1}c(Y) c(\partial_{i})\big)   \big)\nonumber\\
=&\Big(-2\sum_{i=1}^{2m}T(u,v,\partial_{i})-2g(Y,\partial_{i})g(u,v) \Big){\rm Tr}_{S(TM)} [ {\rm Id}].
\end{align}

Then
 \begin{align}
L_{1}(x_0)=&\int_M\int_{|\xi|=1}{\rm Tr}_{S(TM)}[\sigma_{-n}(c(u)[(D +(c(T)+\sqrt{-1}c(Y)))^2,f]c(v) \nonumber\\
&\times (D +(c(T)+\sqrt{-1}c(Y)))^{ -2m })]\sigma(\xi) {\rm d}x \nonumber\\
=&\int_{ M}  \Big(-\Delta(f)g(u,v)+ 2g(u,\nabla^{S(TM)}_{{\rm grad}f}v)-4 T(u,v,{\rm grad}f)   \Big){\rm Tr}_{S(TM)}[ {\rm Id}]\rm{vol}_{M}.
\end{align}

 (2) Explicit representation of  $L_{2}(x_0)$

An easy calculation gives
  \begin{align}
& \pi^+_{\xi_n}\big( \sigma_{0}( c(u)[(D +c(T)+\sqrt{-1}c(Y))^2,f]c(v)(D +c(T)+\sqrt{-1}c(Y))^{-1})\big)\nonumber\\
=& \pi^+_{\xi_n}\big( \sigma_{1}(  c(u)[(D +c(T)+\sqrt{-1}c(Y))^2,f] c(v)) \sigma_{-1}((D +c(T)+\sqrt{-1}c(Y))^{-1})\big)\nonumber\\
                                  =& \pi^+_{\xi_n}\big( -2\sqrt{-1}\sum_{jl=1}^{2m}\partial_{x_j}(f)g^{jl}\xi_l  c(u) c(v) \frac{\sqrt{-1}c(\xi)}{|\xi|^2}\big)\nonumber\\
                =& \pi^+_{\xi_n}\big( -2\sqrt{-1}\sum_{jl=1}^{2m-1}\partial_{x_j}(f)g^{jl}\xi_l c(u) c(v)\frac{\sqrt{-1}c(\xi)}{|\xi|^2}\big)
                  +\pi^+_{\xi_n}\big( -2\sqrt{-1} \partial_{x_n}(f)c(u) c(v)\xi_n\frac{\sqrt{-1}c(\xi)}{|\xi|^2}\big)\nonumber\\
     =& -2\sqrt{-1}\sum_{jl=1}^{2m-1}\partial_{x_j}(f)g^{jl}c(u) c(v) \xi_l  \frac{c(\xi')+\sqrt{-1}c(\rm{d}x_{n})}{2(\xi_n-i)}
                   -2\sqrt{-1} \partial_{x_n}(f)  c(u) c(v)\frac{\sqrt{-1}c(\xi')-c(\rm{d}x_{n})}{2(\xi_n-i)}.
 \end{align}
Where some basic facts and formulae about Boutet de Monvel's calculus which can be found  in Sec.2 in \cite{Wa1}.
By   Lemma 3.3, we get
  \begin{align}
&\sigma_{-2m+1}((D +(c(T)+\sqrt{-1}c(Y)))^{-2m+1})\nonumber\\
=&\sigma_{-2m+2}((D +(c(T)+\sqrt{-1}c(Y)))^{-2m+2})\sigma_{-1}((D +(c(T)+\sqrt{-1}c(Y)))^{-1})\nonumber\\
=& \frac{\sqrt{-1}c(\xi)}{|\xi|^{2m}},
\end{align}
then
  \begin{align}
&\partial_{\xi_n}\big(\sigma_{-2m+1}((D +(c(T)+\sqrt{-1}c(Y)))^{-2m+1})\big)(x_0)
=\partial_{\xi_n}\big(\frac{\sqrt{-1}c(\xi)}{|\xi|^{2m}}\big)(x_0)\nonumber\\
=&\frac{-2\sqrt{-1}m\xi_nc(\xi')}{(1+\xi_n^2)^{m+1}}+\frac{ \sqrt{-1}(1+\xi_n^{2}-2m\xi_n^{2})c(dx_{n})}{(1+\xi_n^2)^{m+1}}.
\end{align}
By the relation of the Clifford action and $ {\rm{Tr}}(AB)= {\rm{Tr}}(BA) $, we get
\begin{align}
&{\rm Tr}_{S(TM)}\Big[\pi^+_{\xi_n}\big( \sigma_{0}( c(u)[(D +c(T)+\sqrt{-1}c(Y))^2,f]c(v)(D +c(T)+\sqrt{-1}c(Y))^{-1})\big)\nonumber\\
&\times \partial_{\xi_n}\big(\sigma_{-2m+1}((D +(c(T)+\sqrt{-1}c(Y)))^{-2m+1})\big)(x_0)
\Big]\nonumber\\
 =& \sum_{jl=1}^{2m-1}\partial_{x_j}(f)g^{jl}  \xi_l \frac{ -2m \xi_n}{(\xi_n-i)(1+\xi_n^2)^{m+1}}
 {\rm Tr}_{S(TM)}[ c(u)c(v)c(\xi')c(\xi')]\nonumber\\
 &+ \sum_{jl=1}^{2m-1}\partial_{x_j}(f)g^{jl}  \xi_l\frac{ (1+\xi_n^{2}-2m\xi_n^{2})}{(\xi_n-i)(1+\xi_n^2)^{m+1}} {\rm Tr}_{S(TM)} [c(u)c(v)c(\xi')c(dx_{n})]\nonumber\\
   &+  \sum_{jl=1}^{2m-1}\partial_{x_j}(f)g^{jl}  \xi_l\frac{-2\sqrt{-1}m   \xi_n}{(\xi_n-i)(1+\xi_n^2)^{m+1}}
 {\rm Tr}_{S(TM)}[ c(u)c(v)c(dx_{n})c(\xi')]\nonumber\\
 &+  \sum_{jl=1}^{2m-1}\partial_{x_j}(f)g^{jl}  \xi_l\frac{ \sqrt{-1}  (1+\xi_n^{2}-2m\xi_n^{2})}{(\xi_n-i)(1+\xi_n^2)^{m+1}} {\rm Tr}_{S(TM)} [c(u)c(v)c(dx_{n})c(dx_{n})]\nonumber\\
  &+ \frac{-2\sqrt{-1}m \partial_{x_n}(f) \xi_n}{(\xi_n-i)(1+\xi_n^2)^{m+1}}
 {\rm Tr}_{S(TM)}[ c(u)c(v)c(\xi')c(\xi')]\nonumber\\
 &+ \frac{  \sqrt{-1} \partial_{x_n}(f)(1+\xi_n^{2}-2m\xi_n^{2})}{(\xi_n-i)(1+\xi_n^2)^{m+1}} {\rm Tr}_{S(TM)} [c(u)c(v) c(\xi')c(dx_{n}) ]\nonumber\\
  &+ \frac{2 m \partial_{x_n}(f) \xi_n}{(\xi_n-i)(1+\xi_n^2)^{m+1}}
 {\rm Tr}_{S(TM)}[ c(u)c(v)c(dx_{n})  c(\xi')]\nonumber\\
 &- \frac{
  \partial_{x_n}(f)(1+\xi_n^{2}-2m\xi_n^{2})}{(\xi_n-i)(1+\xi_n^2)^{m+1}} {\rm Tr}_{S(TM)} [c(u)c(v)c(dx_{n})c(dx_{n})].
 \end{align}
Also, straightforward computations yield
 \begin{align}
&{\rm{Tr}}\Big(\sum_{j=1}^{2m-1}\partial_{x_j}(f)c(u)c(v)c(\partial_{j})c(dx_{n})  \Big)(x_{0})\nonumber\\
=&  \sum_{j=1}^{2m-1}\partial_{x_j}(f)\Big(g(u,dx_{n})g(v,\partial_{j}) -g(u,\partial_{j})g(v,dx_{n}) +g(u,v)g( dx_{n},\partial_{j})   \Big)(x_{0}){\rm Tr}  [ {\rm Id}]\nonumber\\
=&  \sum_{j=1}^{2m-1}\partial_{x_j}(f)\Big(u_{n}\sum_{j=1}^{2m-1}v_{j} \partial_{j}- v_{n}\sum_{j=1}^{2m-1}u_{j} \partial_{j} \Big)(x_{0}){\rm Tr}  [ {\rm Id}]\nonumber\\
=&  \Big(u_{n} v_{T}(f)- v_{n} u_{T}(f)   \Big)(x_{0}) {\rm Tr}  [ {\rm Id}],
\end{align}
where $ v_{T}(f)(x_{0})=\sum_{j=1}^{2m-1}\partial_{x_j}(f)g(v_{j}, \partial_{j})(x_{0})=\sum_{j=1}^{2m-1}g(v_{j},\partial_{x_j}(f)\partial_{j})(x_{0}) $.
Substituting into above calculation we get
  \begin{align}
L_{2}(x_0) =&\int_{|\xi'|=1}\int^{+\infty}_{-\infty}
{\rm Tr}_{S(TM) }
[ \sigma^+_{0}(c(u) [(D +c(T)+\sqrt{-1}c(Y))^2,f]c(v)(D +c(T)+\sqrt{-1}c(Y))^{-1}) \nonumber\\
 & \times \partial_{\xi_n}\sigma_{-2m+1}((D +(c(T)+\sqrt{-1}c(Y)))^{-2m+1})(x',0,\xi',\xi_n)]\rm{d}\xi_{2m}\sigma(\xi')\rm{d}x'\nonumber\\
 =&\int_{|\xi'|=1}\int^{+\infty}_{-\infty}\Big(
 \sum_{j=1}^{2m-1}\partial_{x_j}(f) \frac{(1+\xi_n^{2}-2m\xi_n^{2})\xi_{j}   }{(\xi_n-i)(1+\xi_n^2)^{m+1}}
 {\rm Tr}_{S(TM)}[ c(u)c(v)c(\xi')c(dx_{n})] \nonumber\\
 &-  \sum_{j=1}^{2m-1}\partial_{x_j}(f)\frac{ 2\sqrt{-1}m \xi_n }{(\xi_n-i)(1+\xi_n^2)^{m+1}} {\rm Tr}_{S(TM)} [c(u)c(v)c(dx_{n})c(\xi')]\nonumber\\
&- \frac{2\sqrt{-1}m \partial_{x_n}(f) \xi_n}{(\xi_n-i)(1+\xi_n^2)^{m+1}}
 {\rm Tr}_{S(TM)}[ c(u)c(v)c(\xi')c(\xi')] \nonumber\\
 &- \frac{   \partial_{x_n}(f)(1+\xi_n^{2}-2m\xi_n^{2})}{(\xi_n-i)(1+\xi_n^2)^{m+1}} {\rm Tr}_{S(TM)} [c(u)c(v)c(dx_{n})c(dx_{n})] \Big)
 \rm{d}\xi_{2m}\sigma(\xi')\rm{d}x'\nonumber\\
 =&  \frac{(2m-2)! 2^{-2m+1}\sqrt{-1}\pi}{(m-1)!m!}   \Big(u_{n} v_{T}(f)- v_{n} u_{T}(f)   \Big) \xi_{j}{\rm Tr}  [ {\rm Id}]                      \nonumber\\
 &+ \frac{(2m-1)! 2^{-2m+1}\sqrt{-1}\pi}{(m-1)!m!} \partial_{x_n}(f) g(u,v){\rm Tr}_{S(TM)} [ {\rm Id}] {\rm{vol}}(S^{2m-2})
{\rm d}\rm{vol}_{\partial_{M}}.
\end{align}
 Summing up (4.43) and (4.49) leads to the divergence theorem for manifold with boundary  as follows.
\begin{thm} Let $M$ be an $n=2m$ dimension compact oriented Riemannian manifold with boundary $\partial M$, the divergence theorem for $[ (D +c(T)+\sqrt{-1}c(Y))^{2},f]$  multiplied by $c(u) $,$c(v) $   read
 \begin{align}
 &\widetilde{{\rm Wres}}[\pi^+(c(u) [(D +(c(T)+\sqrt{-1}c(Y)))^2,f]c(v)(D +c(T)+\sqrt{-1}c(Y))^{-1} )\nonumber\\
\circ &\pi^+(D +(c(T)+\sqrt{-1}c(Y)))^{-2m+1}] \nonumber\\
 =&\int_{ M}  \Big(-\Delta(f)g(u,v)+ 2g(u,\nabla^{S(TM)}_{{\rm grad}f}v)-4 T(u,v,{\rm grad}f)   \Big){\rm Tr}_{S(TM)}[ {\rm Id}]{\rm d}\rm{vol}_{M}\nonumber\\
& +\int_{\partial_{M}} \Big(\frac{(2m-2)! 2^{-2m+1}\sqrt{-1}\pi}{(m-1)!m!}   \Big(u_{n} v_{T}(f)- v_{n} u_{T}(f)   \Big)                  \nonumber\\
 &+ \frac{(2m-1)! 2^{-2m+1}\sqrt{-1}\pi}{(m-1)!m!} \partial_{x_n}(f) g(u,v)\Big){\rm Tr}_{S(TM)} [ {\rm Id}] {\rm{vol}}(S^{2m-2})
{\rm d}\rm{vol}_{\partial_{M}}  .
\end{align}
\end{thm}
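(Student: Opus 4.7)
The plan is to decompose the $\widetilde{\rm Wres}$ expression into an interior density and a boundary density using Theorem 3.2, and to compute each contribution separately. Explicitly, write
\begin{equation*}
\widetilde{{\rm Wres}}[\pi^+A\circ\pi^+B]=L_{1}(x_{0})+L_{2}(x_{0}),
\end{equation*}
where $A=c(u)[(D+c(T)+\sqrt{-1}c(Y))^{2},f]c(v)(D+c(T)+\sqrt{-1}c(Y))^{-1}$ and $B=(D+c(T)+\sqrt{-1}c(Y))^{-2m+1}$, with $L_{1}$ the bulk integral of the $-2m$ order symbol over $S^{*}M$ and $L_{2}$ the boundary integral over $S^{*}(\partial M)$. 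This is exactly the structure used for Theorem 3.4 and for Cases (1)--(3) of Section~4.2, so the task reduces to running that machinery with the full perturbation $\Phi=c(T)+\sqrt{-1}c(Y)$.

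For the bulk term $L_{1}$, I would invoke Lemma 4.15 to read off the symbols of $[(D+c(T)+\sqrt{-1}c(Y))^{2},f]$, then substitute $\Phi=c(T)+\sqrt{-1}c(Y)$ into the master formula (4.16) to obtain $H_{1}+H_{2}+H_{3}+H_{4}$. Taking fibrewise spinor traces, the $H_{2}$ and $H_{3}$ terms reproduce the Laplacian piece $-\Delta(f)g(u,v)$ and the gradient-connection piece $2g(u,\nabla^{S(TM)}_{{\rm grad}f}v)$ exactly as in Case (1) of Theorem 4.13; the new feature is the mixed Clifford contribution from $H_{1}$, which splits into a pure-torsion part giving $-4T(u,v,{\rm grad}f)$ and a $c(Y)$-part that vanishes by symmetry/skew-adjointness (this is where equations (4.41)--(4.42) are used). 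The $H_{4}$ term yields $0$ after the $\xi$-sphere integration by parity. Summing and integrating against $\sigma(\xi)\,{\rm d}x$ gives (4.43).

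For the boundary term $L_{2}$, I would compute $\pi^{+}_{\xi_{n}}$ of $\sigma_{0}(A)$ exactly as in (4.44), noting that only the highest-order piece $\sigma_{1}([(D+\Phi)^{2},f])=-2\sqrt{-1}\sum \partial_{x_{j}}(f)g^{jl}\xi_{l}$ contributes to the relevant order since $\sigma_{1}$ is unchanged by the perturbation (Lemma 4.15). Then apply Lemma 3.3 in the normal-coordinate form to obtain $\sigma_{-2m+1}((D+\Phi)^{-2m+1})=\sqrt{-1}c(\xi)/|\xi|^{2m}$ and differentiate in $\xi_{n}$ as in (4.46). After reducing via Clifford trace relations, the only surviving terms are those where the Clifford product $c(u)c(v)c(\partial_{j})c(dx_{n})$ produces a nontrivial scalar, which evaluates to $u_{n}v_{T}(f)-v_{n}u_{T}(f)$, and those where it produces $g(u,v)$ against $\partial_{x_{n}}(f)$. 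The contour integrals are handled by the residue formulas (3.18)--(3.19) derived earlier, giving the two boundary coefficients $\frac{(2m-2)!2^{-2m+1}\sqrt{-1}\pi}{(m-1)!m!}$ and $\frac{(2m-1)!2^{-2m+1}\sqrt{-1}\pi}{(m-1)!m!}$.

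The main obstacle I expect is the bookkeeping of Clifford traces in $L_{2}$: the product $c(u)c(v)$ (with $u,v$ arbitrary tangent vectors, not necessarily tangential to $\partial M$) interacts with both $c(\xi')$ and $c(dx_{n})$, and every cross-term must be evaluated using $\mathrm{Tr}(c(e_{i})c(e_{j}))=-\delta_{ij}\mathrm{Tr}[{\rm Id}]$ together with the four-fold Clifford trace identity to identify which combinations survive after integration over $|\xi'|=1$. A secondary subtlety is verifying that the $c(T)$ and $c(Y)$ contributions to $\sigma_{-2m+1}$ of $(D+\Phi)^{-2m+1}$ are of strictly lower order than what is picked up by $\pi^{+}\sigma_{0}(A)\cdot\partial_{\xi_{n}}\sigma_{-2m+1}$ at the boundary order $-2m$; this follows because each application of $\Phi$ to the parametrix lowers the polyhomogeneous degree in $\xi$ by one, so the perturbation only affects $L_{2}$ through the leading $\frac{\sqrt{-1}c(\xi)}{|\xi|^{2m}}$ piece, exactly as in (4.45). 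Once these two checks are in place, adding $L_{1}$ from (4.43) and $L_{2}$ from (4.49) yields the stated identity (4.50).
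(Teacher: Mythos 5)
Your overall strategy --- splitting $\widetilde{\rm Wres}$ into the interior density $L_{1}$ and the boundary density $L_{2}$ via the Fedosov--Golse--Leichtnam--Schrohe trace, computing $L_{1}$ from the master formula (4.16) with $\Phi=c(T)+\sqrt{-1}c(Y)$, and computing $L_{2}$ from $\pi^{+}_{\xi_{n}}\sigma_{0}(A)\cdot\partial_{\xi_{n}}\sigma_{-2m+1}(B)$ together with the contour-integral formulas (3.18)--(3.19) --- is exactly the paper's route, and the boundary half of your argument (including the observation that the perturbation does not alter $\sigma_{1}([(D+\Phi)^{2},f])$ or the leading piece $\sqrt{-1}c(\xi)/|\xi|^{2m}$ of the parametrix) matches (4.44)--(4.49).

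There is, however, a genuine error in your treatment of the bulk term. You claim that $H_{4}$ vanishes after integration over the cosphere by parity, and that the $c(Y)$-part of $H_{1}$ vanishes on its own by skew-adjointness. Neither holds: $H_{4}$ carries the factor $\xi_{l}\xi_{j}|\xi|^{-2m}$, which is \emph{even} in $\xi$ and integrates over $|\xi|=1$ to a nonzero multiple of $\delta_{lj}$, so $H_{4}$ survives; and the spinor trace of $c(u)\big(c(\partial_{i})c(Y)+c(Y)c(\partial_{i})\big)c(v)=-2g(Y,\partial_{i})\,c(u)c(v)$ is proportional to $g(Y,\partial_{i})g(u,v)\,{\rm Tr}[{\rm Id}]$, which is not zero. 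In the paper's computation (equations (4.41)--(4.43)) it is precisely the \emph{sum} of the $H_{1}$ trace and the $H_{4}$ trace that cancels the $Y$-contributions and doubles the torsion contribution, producing the coefficient $-4$ in $-4T(u,v,{\rm grad}f)$. With your accounting you would land on $-2T(u,v,{\rm grad}f)$, possibly together with a spurious term proportional to $g(Y,{\rm grad}f)g(u,v)$, which does not match the stated identity. To repair this you must keep $H_{4}$, evaluate ${\rm Tr}\big(c(u)c(v)\sum_{i}(c(\partial_{i})\Phi+\Phi c(\partial_{i}))\big)$ explicitly as in (4.42), and add it to the $H_{1}$ contribution of (4.41) before performing the sphere integration.
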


$\mathbf{Case ~~II}$:
\begin{align}
&\widetilde{{\rm Wres}}[\pi^+([c(u)(D +(c(T)+\sqrt{-1}c(Y)))^2,f]c(v)(D +(c(T)+\sqrt{-1}c(Y)))^{-2})\nonumber\\
&\circ\pi^+(D +(c(T)+\sqrt{-1}c(Y)))^{-2m+2 }]
=\int_M\int_{|\xi|=1}{\rm Tr}_{S(TM)}[\sigma_{-n}([D^2,f] D^{ -2m+2})]\sigma(\xi)\texttt{d}x+\int_{\partial
M}\widehat{\Psi } ,
\end{align}
where $ r-k-|\alpha|+l-j-1=-2m,~~r\leq -1,l\leq-2m+2$, and $r=-1,~k=|\alpha|=j=0,l=-2m+2$, then
\begin{align}
\widehat{\Psi }=&\int_{|\xi'|=1}\int^{+\infty}_{-\infty}
{\rm Tr}_{S(TM)\otimes F}
[ \sigma^+_{-1}( [c(u)(D +(c(T)+\sqrt{-1}c(Y)))^2,f]c(v)(D +(c(T)+\sqrt{-1}c(Y)))^{-2}) \nonumber\\
& \partial_{\xi_n}\sigma_{-2m+2}((D +(c(T)+\sqrt{-1}c(Y)))^{-2m+2 })(x',0,\xi',\xi_n)]\rm{d}\xi_{2m}\sigma(\xi')\rm{d}x'.
\end{align}
An easy calculation gives
  \begin{align}
& \pi^+_{\xi_n}\big( \sigma_{-1}( c(u)[(D +(c(T)+\sqrt{-1}c(Y)))^2,f]c(v)(D +(c(T)+\sqrt{-1}c(Y)))^{-2})\big)\nonumber\\
=& \pi^+_{\xi_n}\big( \sigma_{1}( c(u)[(D +(c(T)+\sqrt{-1}c(Y)))^2,f])c(v) \sigma_{-2}((D +(c(T)+\sqrt{-1}c(Y)))^{-2})\big)\nonumber\\
                                  =& \pi^+_{\xi_n}\big( -2\sqrt{-1}\sum_{jl=1}^{2m}\partial_{x_j}(f)g^{jl}\xi_l c(u)c(v) \frac{1}{|\xi|^2}\big)\nonumber\\
                =& \pi^+_{\xi_n}\big( -2\sqrt{-1}\sum_{jl=1}^{2m-1}\partial_{x_j}(f)g^{jl}\xi_l c(u)c(v)\frac{1}{|\xi|^2}\big)
                  +\pi^+_{\xi_n}\big( -2\sqrt{-1} \partial_{x_n}(f)\xi_n c(u)c(v)\frac{1}{|\xi|^2}\big)\nonumber\\
     =& -2\sqrt{-1}\sum_{jl=1}^{2m-1}\partial_{x_j}(f)g^{jl}\xi_l  \frac{-\sqrt{-1}}{2(\xi_n-i)}c(u)c(v)
                   -2\sqrt{-1} \partial_{x_n}(f)  \frac{1}{2(\xi_n-i)}c(u)c(v).
 \end{align}
And
   \begin{align}
 \partial_{\xi_n}\big(\sigma_{-2m+2}((D +(c(T)+\sqrt{-1}c(Y)))^{-2m+2})\big)(x_0)
=\partial_{\xi_n}\big(|\xi|^{-2m+2}\big) (x_0)
= \frac{2(1-m)\xi_n }{(1+\xi_n^2)^{m }}.
\end{align}
By the relation of the Clifford action and $ {\rm{Tr}}(AB)= {\rm{Tr}}(BA) $, we get
\begin{align}
& {\rm Tr}_{S(TM)}[\sigma^+_{-1}([c(u)(D+(c(T)+\sqrt{-1}c(Y)))^2,f]c(v)(D +(c(T)+\sqrt{-1}c(Y)))^{-2})  \nonumber\\
\times&\partial_{\xi_n}\sigma_{-2m+2}((D +(c(T)+\sqrt{-1}c(Y)))^{-2m+2})(x',0,\xi',\xi_n)] \nonumber\\
 =& \frac{2(m-1)  \partial_{x_j} \sum _{j=1}^{2m-1}\partial_{x_j}(f) \xi_j}{(\xi_n-i)(1+\xi_n^2)^{m }}
 {\rm Tr}_{S(TM)} [ c(u)c(v)]\nonumber\\
 &+ \frac{ -2\sqrt{-1} \partial_{x_n}(f)(1-m)\xi_n}{(\xi_n-i)(1+\xi_n^2)^{m }} {\rm Tr}_{S(TM)} [ c(u)c(v)].
 \end{align}
Substituting into above calculation we get
\begin{align}
\widehat{\Psi }=&\int_{|\xi'|=1}\int^{+\infty}_{-\infty}
{\rm Tr}_{S(TM)\otimes F}
[ \sigma^+_{-1}( [c(u)(D +(c(T)+\sqrt{-1}c(Y)))^2,f]c(v)(D +(c(T)+\sqrt{-1}c(Y)))^{-2}) \nonumber\\
& \partial_{\xi_n}\sigma_{-2m+2}((D +(c(T)+\sqrt{-1}c(Y)))^{-2m+2 })(x',0,\xi',\xi_n)]\rm{d}\xi_{2m}\sigma(\xi')\rm{d}x'\nonumber\\
=&\int_{|\xi'|=1}\int^{+\infty}_{-\infty}\Big(
\frac{2(m-1)  \partial_{x_j} \sum _{j=1}^{2m-1}\partial_{x_j}(f) \xi_j}{(\xi_n-i)(1+\xi_n^2)^{m }}
 {\rm Tr}_{S(TM)} [ {\rm Id}]\nonumber\\
  &+ \frac{ -2\sqrt{-1} \partial_{x_n}(f)(1-m)\xi_n}{(\xi_n-i)(1+\xi_n^2)^{m }} {\rm Tr}_{S(TM)} [ {\rm Id}]. \Big)
 \rm{d}\xi_{2m}\sigma(\xi')\rm{d}x'\nonumber\\
 =&\frac{- (2m-2)! 2^{-2m+2}\sqrt{-1}\pi}{(m-2)!m!} \partial_{x_n}(f) g(u,v){\rm Tr}_{S(TM)} [ {\rm Id}] {\rm{vol}}(S^{2m-2})
{\rm d}\rm{vol}_{\partial_{M}}.
\end{align}
 Summing up (4.43) and (4.56) leads to the divergence theorem for manifold with boundary  as follows.
\begin{thm} Let $M$ be an $n=2m$ dimension compact oriented Riemannian manifold with boundary $\partial M$, the divergence theorem for $[ (D +c(T)+\sqrt{-1}c(Y))^{2},f]$  multiplied by $c(u) $,$c(v) $   read
 \begin{align}
 &\widetilde{{\rm Wres}}[\pi^+([c(u)(D +(c(T)+\sqrt{-1}c(Y)))^2,f]c(v)(D +(c(T)+\sqrt{-1}c(Y)))^{-2})\nonumber\\
&\circ\pi^+(D +(c(T)+\sqrt{-1}c(Y)))^{-2m+2 }]\nonumber\\
 =&\int_{ M}  \Big(-\Delta(f)g(u,v)+ 2g(u,\nabla^{S(TM)}_{{\rm grad}f}v)-4 T(u,v,{\rm grad}f)   \Big){\rm Tr}_{S(TM)}[ {\rm Id}]{\rm d}\rm{vol}_{M} \nonumber\\
& +\int_{\partial_{M}}  \frac{- (2m-2)! 2^{-2m+2}\sqrt{-1}\pi}{(m-2)!m!} \partial_{x_n}(f) g(u,v)   {\rm Tr}_{S(TM)} [ {\rm Id}] {\rm{vol}}(S^{2m-2})
{\rm d}\rm{vol}_{\partial_{M}}  .
\end{align}
\end{thm}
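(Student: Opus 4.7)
The plan is to apply the Fedosov--Golse--Leichtnam--Schrohe formula of Theorem 3.2 to the operator
\[
\pi^+\bigl([c(u)(D +(c(T)+\sqrt{-1}c(Y)))^2,f]c(v)(D +(c(T)+\sqrt{-1}c(Y)))^{-2}\bigr)\circ\pi^+(D +(c(T)+\sqrt{-1}c(Y)))^{-2m+2},
\]
which decomposes $\widetilde{{\rm Wres}}$ into an interior integral over $S^*M$ and a boundary integral over $S^*\partial M$. The interior integral coincides with $L_{1}(x_{0})$ from equation (4.43), since the top-order symbols that feed into $\sigma_{-2m}$ of the composition are the same as in Case~I: the leading parts of $\widetilde{D}^{-2m}$ and of the product $(D+(c(T)+\sqrt{-1}c(Y)))^{-2}\cdot(D+(c(T)+\sqrt{-1}c(Y)))^{-2m+2}$ agree modulo lower-order terms that contribute $0$ to the $-2m$ piece. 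Hence the bulk term is exactly
\[
\int_{M}\Bigl(-\Delta(f)g(u,v)+2g(u,\nabla^{S(TM)}_{\mathrm{grad}f}v)-4T(u,v,\mathrm{grad}f)\Bigr){\rm Tr}_{S(TM)}[{\rm Id}]\,\mathrm{dvol}_{M}.
\]

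For the boundary contribution $\widehat{\Psi}$, I would work in normal coordinates about a boundary point with $\xi=(\xi',\xi_{n})$ and $|\xi'|=1$. First, using Lemma 4.15 together with Lemma 2.4, I extract the symbol $\sigma_{-1}([c(u)\widetilde{D}^{2},f]c(v)\widetilde{D}^{-2})$; since the only order-one piece of $[\widetilde{D}^{2},f]$ is $-2\sqrt{-1}\sum\partial_{x_{j}}(f)g^{jl}\xi_{l}$ and $\sigma_{-2}(\widetilde{D}^{-2})=|\xi|^{-2}$, the computation is essentially algebraic and gives a $c(u)c(v)$-prefactor. Then I split the sum $\sum_{jl=1}^{2m}\partial_{x_{j}}(f)g^{jl}\xi_{l}$ into a tangential part ($l\le 2m-1$) and a normal part ($l=2m$, contributing $\partial_{x_{n}}(f)\xi_{n}$), and apply $\pi^{+}_{\xi_{n}}$ separately. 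The derivative $\partial_{\xi_{n}}\sigma_{-2m+2}(\widetilde{D}^{-2m+2})(x_{0})=\partial_{\xi_{n}}|\xi|^{-2m+2}=2(1-m)\xi_{n}/(1+\xi_{n}^{2})^{m}$ is scalar, so the Clifford algebra reduces to ${\rm Tr}_{S(TM)}[c(u)c(v)]=-g(u,v){\rm Tr}[{\rm Id}]$, and the tangential-$\xi_{j}$ piece integrates to zero by parity of $\int_{|\xi'|=1}\xi_{j}\,\sigma(\xi')$.

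The residue integral in $\xi_{n}$ is then handled by the standard contour argument used for (3.19): one picks up the simple pole at $\xi_{n}=i$ inside the upper half-plane, which produces factors of the shape $(2m-2)!2^{-2m+2}\sqrt{-1}\pi/((m-2)!m!)$. Multiplying by $g(u,v)$, the tangential angular integral $\mathrm{vol}(S^{2m-2})$, and $\partial_{x_{n}}(f)$, and noting the overall sign coming from $2(1-m)$ and from the $-\sqrt{-1}/[2(\xi_{n}-i)]$ factor in $\pi^{+}_{\xi_{n}}$, I arrive at
\[
\widehat{\Psi}=\frac{-(2m-2)!\,2^{-2m+2}\sqrt{-1}\pi}{(m-2)!\,m!}\,\partial_{x_{n}}(f)\,g(u,v)\,{\rm Tr}_{S(TM)}[{\rm Id}]\,\mathrm{vol}(S^{2m-2})\,\mathrm{dvol}_{\partial M},
\]
which is exactly (4.56). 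Adding the bulk and boundary contributions yields the stated identity.

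I expect the main obstacle to be bookkeeping in the boundary trace computation: keeping track of the four Clifford combinations $c(\xi')c(\xi')$, $c(\xi')c(\mathrm{d}x_{n})$, $c(\mathrm{d}x_{n})c(\xi')$, $c(\mathrm{d}x_{n})c(\mathrm{d}x_{n})$ inside ${\rm Tr}_{S(TM)}[c(u)c(v)\cdots]$, together with the signs arising from the $\pi^{+}_{\xi_{n}}$ projector and the residue at $\xi_{n}=i$. Because the derivative $\partial_{\xi_{n}}|\xi|^{-2m+2}$ carries no Clifford factor, most cross-terms collapse and only the $g(u,v)$ piece survives, but this collapse must be verified carefully; once it is, the remaining $\xi_{n}$-integral is a standard residue and the final combinatorial constants match those displayed in the theorem.
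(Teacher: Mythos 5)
Your proposal follows essentially the same route as the paper: the FGLS decomposition into the interior term $L_{1}(x_{0})$ of (4.43) plus the boundary term $\widehat{\Psi}$, the extraction of $\sigma_{-1}=\sigma_{1}([\widetilde{D}^{2},f])c(u)c(v)|\xi|^{-2}$ with the tangential/normal split under $\pi^{+}_{\xi_{n}}$, the scalar derivative $\partial_{\xi_{n}}|\xi|^{-2m+2}$ reducing the trace to $-g(u,v){\rm Tr}[{\rm Id}]$, and the residue at $\xi_{n}=i$ producing the constant in (4.56). Your explicit parity argument for discarding the tangential $\xi_{j}$ piece on $|\xi'|=1$ is the same step the paper performs implicitly between (4.55) and (4.56), so there is no substantive difference in method.
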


\section{Conclusions and outlook }

In this paper,  we recover the divergence theorem in the classical differential geometry by using non-commutative residues under the non-commutative geometric framework.
  This approach connects the divergence theorem  and noncommutative geometry, offering a deep insight into the geometric structure of both classical and noncommutative spaces. So it would be interesting to recover other important tensors in both the classical setup as well as for the generalised or quantum geometries. The case of special spectral forms appears particularly worthy of study, in view of the above remarks.  We hope to report on these questions in due course elsewhere.

\section*{ AUTHOR DECLARATIONS}
Conflict of Interest:

The authors have no conflicts to disclose.
\section*{ Author Contributions}
Jian Wang: Investigation (equal); Writing- original draft (equal). Yong Wang: Investigation (equal); Writing-original draft (equal).
\section*{DATA AVAILABILITY}
Data sharing is not applicable to this article as no new data were created or analyzed in this study.

\section*{ Acknowledgements}
The first author was supported by NSFC. 11501414. The second author was supported by NSFC. 11771070.
The authors also thank the referee for his (or her) careful reading and
helpful comments.

\end{document}